\newif\ifdraft \draftfalse
\newif\iffull \fulltrue
\makeatletter \@input{tex.flags} \makeatother
\definecolor{DarkGreen}{rgb}{0.1,0.5,0.1}
\definecolor{DarkRed}{rgb}{0.5,0.1,0.1}
\definecolor{DarkBlue}{rgb}{0.1,0.1,0.5}
\newcommand{\ar}[1]{\ifdraft \textcolor{brown}{[AR: #1]}\fi}
\newcommand{\sw}[1]{\ifdraft \textcolor{blue}{[SW: #1]}\fi}
\newcommand{\mk}[1]{\ifdraft \textcolor{red}{[MK: #1]}\fi}
\newcommand\RR{\mathbb{R}}
\newcommand\cA{\mathcal{A}}
\newcommand\cF{\mathcal{F}}
\newcommand\cM{\mathcal{M}}
\newcommand\cP{\mathcal{P}}
\newcommand\cG{\mathcal{G}}
\newcommand\cO{\mathcal{O}}
\newcommand\cR{\mathcal{R}}
\newcommand\cX{\mathcal{X}}
\newcommand\cN{\mathcal{N}}
\newcommand\cT{\mathcal{T}}
\newcommand\cS{\mathbf{s}}
\newcommand{\cI}{\mathcal{I}}
\newcommand{\searc}{\mbox{{\sf SearchCom}\xspace}}
\newcommand{\bfs}{\mbox{{\sf SFS}\xspace}}
\newcommand{\SV}{\mbox{{\sf SV}\xspace}}
\newcommand{\mc}{\mbox{{\sf Target}\xspace}}
\newcommand{\pmc}{\mbox{{\sf PTarget}\xspace}}
\newcommand{\infect}{\mbox{{\sf infect}\xspace}}
\newcommand{\CN}{\mbox{{\sf CN}\xspace}}
\renewcommand{\tilde}{\widetilde}
\newcommand{\eps}{\varepsilon}
\def\epsilon{\varepsilon}
\newcommand{\IC}{\mathrm{IC}}
\DeclareMathOperator{\Lap}{Lap}
\renewcommand{\hat}{\widehat}
\DeclareMathOperator*{\argmax}{\mathrm{argmax}}
\newcommand{\INDSTATE}[1][1]{\STATE\hspace{#1\algorithmicindent}}
\newtheorem*{theorem*}{Theorem}
\declaretheorem[
  name=Theorem,
  refname={theorem, theorems},
  Refname={Theorem, Theorems}]{theorem}
\declaretheorem[
  name=Lemma,
  refname={lemma, lemmas},
  Refname={Lemma, Lemmas}]{lemma}
\declaretheorem[
  name=Remark,
  refname={remark, remarks},
  Refname={Remark, Remarks}]{remark}
\declaretheorem[
  name=Definition,
  refname={definition, definitions},
  Refname={Definition, Definitions}]{definition}
\title{Privacy for the Protected (Only)}
\author{
Michael Kearns\thanks{Department of Computer and Information Sciences, University of Pennsylvania. Email:
\href{mailto:mkearns@cis.upenn.edu}{mkearns@cis.upenn.edu}}
\and
Aaron Roth\thanks{Department of Computer and Information Sciences, University of Pennsylvania. Email: \href{mailto:aaroth@cis.upenn.edu}{aaroth@cis.upenn.edu}}
\and
Zhiwei Steven Wu\thanks{Department of Computer and Information Sciences, University of Pennsylvania. Email: \href{mailto:wuzhiwei@cis.upenn.edu}{wuzhiwei@cis.upenn.edu}}
\and
Grigory Yaroslavtsev\thanks{Department of Computer and Information Sciences, University of Pennsylvania. Email: \href{mailto:grigoryy@seas.upenn.edu}{grigoryy@seas.upenn.edu}}
}
\begin{document}

\pagenumbering{gobble}
\maketitle

\begin{abstract}
Motivated by tensions between data privacy for individual citizens,
and societal priorities such as counterterrorism and the containment of infectious disease,
we introduce a computational model that distinguishes between parties for whom
privacy is explicitly protected, and those for whom it is not (the {\em targeted\/} subpopulation). 
The goal is the development
of algorithms that can effectively identify and take action upon members of the targeted subpopulation
in a way that minimally compromises the privacy of the protected,
while simultaneously limiting the
expense of distinguishing members of the two groups via costly mechanisms such as surveillance, background checks, or
medical testing.
Within this framework, we provide provably privacy-preserving algorithms for targeted search in social
networks. These algorithms are natural variants of common graph search methods, and ensure privacy
for the protected by the careful injection of noise in the prioritization of potential targets.
We validate the utility of our algorithms with extensive computational experiments on two large-scale
social network datasets.
\end{abstract}

\vfill
\pagebreak
\pagenumbering{arabic}
\section{Introduction}

The tension between useful or essential gathering and analysis of
data about citizens, and the privacy rights of those citizens, is at an
historical peak. Perhaps the most striking and controversial recent example is
the revelation that U.S. intelligence agencies systemically engage in
``bulk collection'' of civilian ``metadata'' detailing telephonic and
other types of communication and activities, with the alleged purpose
of monitoring and thwarting terrorist activity~\cite{Snowden}.
Other compelling examples abound, including in medicine (patient privacy
vs. preventing epidemics), marketing (consumer privacy vs. targeted
advertising), and many other domains.

Debates about (and models for) data privacy often have an ``all or
nothing'' flavor: privacy guarantees are either provided to every
member of a population, or else
privacy is deemed to be a failure.  This dichotomy is only appropriate
if all members of the population have an equal right to, or demand
for, privacy.  Few would argue that actual terrorists should have such
rights, which leads to difficult questions about the balance between
protecting the rights of ordinary citizens, and using all available
means to prevent terrorism.\footnote{A recent National Academies
study~\citep{NAS} reached the conclusion that there are not (yet)
technological alternatives to bulk collection and analysis of civilian
metadata, in the sense that such data is essential in current
counterterrorism practices.}
A major question is whether and when the former
should be sacrificed in service of the latter.  Similarly, in the
medical domain, epidemics (such as the recent international outbreak
of Ebola~\cite{Ebola})
have raised serious debate about the clear public interest in
controlling contagion versus
the privacy
rights of the infected and those that care for them.

\iffalse
In a commercial context, personalized web search and targeted advertising
present complex trade-offs between privacy and usefulness for
consumers and companies~\citep{facebook}.
\mk{media citations.}
\fi

The model and results in this paper represent a step towards
explicit acknowledgments of such trade-offs, and algorithmic methods
for their management. The scenarios sketched above can be broadly
modeled by a population divided into two types. There is a {\em
  protected subpopulation\/} that enjoys (either by law, policy, or
choice) certain privacy guarantees. For instance, in the examples
above, these protected individuals might be non-terrorists, or
uninfected citizens (and perhaps informants and health care professionals).  They are
to be contrasted with the ``unprotected'' or {\em targeted\/} subpopulation,
% who by virtue of law, policy, or
% their own choices,
which does not share those privacy assurances. A key assumption
of the model we will introduce is that the protected or targeted status
of individual subjects is not known, but can be discovered by (possibly costly)
measures, such as surveillance or background investigations (in the case of terrorism)
or medical tests (in the case of disease).
Our overarching goal is to allow parties such as intelligence or medical agencies
% medical professionals and private companies
to identify and take
appropriate actions on the targeted subpopulation, while also
providing privacy assurances for the protected individuals who are not
the specific targets of such efforts --- all while limiting the cost and extent
of the background investigations needed.

As a concrete example of the issues we are concerned with, consider
the problem of using social network data (for example, telephone
calls, emails and text messages between individuals) to search for
candidate terrorists. One natural and broad approach would be to
employ common graph search methods: beginning from known terrorist
``seed'' vertices in the network, neighboring vertices are
investigated, in an attempt to ``grow'' the known subnetwork of
targets.\footnote{This general practice is sometimes referred to as
  ``contact chaining'': {\em ``Communications metadata, domestic and
    foreign, is used to develop contact chains by starting with a
    target and using metadata records to indicate who has communicated
    with the target (1 hop), who has in turn communicated with those
    people (2 hops), and so on. Studying contact chains can help
    identify members of a network of people who may be working
    together; if one is known or suspected to be a terrorist, it
    becomes important to inspect others with whom that individual is
    in contact who may be members of a terrorist network.''}  Section
  3.1 of~\cite{NAS}.}  A major concern is that such search methods
will inevitably encounter protected citizens, and that even taking
action against only discovered targeted individuals may compromise the
privacy of the protected.

%In a similar vein, knowing ``only'' the Amazon purchases of a few
%Facebook friends of an individual who wishes to keep their own data
%private may actually reveal a great deal about them.

% Another major concern is that the same data may later be used for
% purposes unrelated to the claimed objective (for instance, the use
% of the same social network data to detect tax evasion rather than
% terrorism), sometimes referred to as ``mission creep''.  \ar{do we
% really prevent mission creep? In our solution, the NSA seems to
% continue to hold the data, so they could just not use our algorithm
% if the mission creeped enough...}

In order to rigorously study the trade-offs between privacy and
societal interests discussed above, our work introduces a
formal model for privacy of network data that provides provable assurances {\em only\/} to
the protected subpopulation, and gives algorithms that allow effective
investigation of the targeted population.
These algorithms are
deliberately ``noisy'' and are privacy-preserving versions of the
widely used graph search methods mentioned above, and as such
represent only mild (but important) departures from commonly used
approaches.
At the highest level, one can think of our algorithms as outputting
a list of targeted individuals discovered in the network for which
any subsequent action (e.g. publication in a most-wanted list, further
surveillance or arrest in the case of terrorism, or
medical treatment or quarantine in the case of epidemics)
will not compromise the privacy of the protected.

% Our model also ensures robustness against mission creep,
% even under unforseen combinations of current and future datasets.

%\mk{stopped rewrite here; need to discuss implications and changes of terminology/tone;
%e.g. Statistic of Suspicion becomes Statistic of Interest; need to discuss budget, etc.}

The key elements of our model include the following:
\begin{enumerate}
 \item Network data collected over a population of individuals and consisting of pairwise
	contacts (physical, social, electronic, financial, etc.). The
	contacts or links of each individual comprise the private
	data they desire to protect. We assume a third party
	(such as an intelligence agency or medical organization) has
	direct access to this network data, and would like to discover
	and act upon targeted individuals.
\item For each individual, an immutable {\em status bit\/} that
	determines their membership status in the targeted subpopulation
	(such as terrorism or infection). These status bits can be discovered 
	by the third party, but only at some nontrivial cost
	(such as further surveillance or medical testing), and thus there is
%Furthermore, from the point of view of the individual involved revealing his or her "status bit" might be invasive\gy{Added this sentence not using word "privacy" to avoid confusion}.\ar{I think we had decided to not discuss this point to avoid confusion}
	a \textit{budget}
	limiting the number of status bits that an algorithm can reveal.
	One might assume or hope that in practice, this budget is sufficient to
	investigate a number of individuals that is of the order of the targeted
	subpopulation size, but considerably less than that needed to investigate
	every member of the general population.
% \footnote{Since the goal is to discover the targeted subpopulation,
% without a limit on the budget there is a trivial algorithm that just queries the "status bit" of every vertex}.
	% The budget may represent
	% real financial, personnel or other constraints, or it may
	% represent a moral or social aversion to conducting a large number of invasive checks.
	% In different settings, the budget might be large if determining membership in the protected population is cheap and non-invasive (e.g. in the case of an advertiser who needs only to see if a user has opted in to an ad targeting program) or may be small in settings in which determining membership requires a large commitment of resources.
\item A mathematically rigorous notion of individual data privacy (based on the widely studied \emph{differential privacy}~\citep{DMNS06}) that provides guarantees of privacy for the network data of only the protected individuals,
while allowing the discovery % or arbitrary information (including in particular the identities)
of targeted individuals. Informally, this notion guarantees that
compared to a counterfactual world in which any protected individual \emph{arbitrarily} changed any part of their data, or even removed themselves entirely from the computation, their risk (measured with respect to the probability of \emph{arbitrary} events) has not substantially increased.
%any risks for this individual which might be associated with the outcome of the analysis would remain similar to the same risks on the unperturbed data. This reduces the incentive for the individuals to hide or misreport their data.\gy{Changed discussion here.}
\end{enumerate}

Our main results are:
\begin{enumerate}
\item The introduction of a broad class of graph search algorithms
  designed to find and identify targeted
  individuals.
  This class of algorithms is based on a general notion of a {\em statistic
    of proximity\/} --- a network-based measure of how ``close'' a
  given individual $v$ is to a certain set of individuals $S$. For
  instance, one such closeness measure is the number of short paths in
  the network from $v$ to members of $S$.  Our (necessarily
  randomized) algorithms add noise to such statistics in order to
  prioritize which status bits to query (and thus how to spend the
  budget).
\item A theoretical result providing a quantitative privacy guarantee for this class of algorithms,
where the level of privacy depends on a measure of the \textit{sensitivity} of the statistic of proximity
to small changes in the network. %We further show experimentally that our algorithms
%give a trade-off between the level of privacy provided and the budget for revealing status bits.\gy{Changed discussion.}\ar{We don't do this, right? We only run the experiments at a single noise level.}
 % gives a way to trade off this privacy parameter with the budget. If the budget can be arbitrarily large, the privacy afforded to the protected individuals can become perfect. On the other hand, since this might be expensive or otherwise undesirable (e.g. if the investigations to determine protected status are themselves invasive), our method allows the analyst to achieve other points on the trade-off curve.
\item Extensive computational experiments in which we demonstrate the effectiveness of
	our privacy-preserving algorithms on real social network
	data. % , at different points on the trade-off curve between our privacy parameter and the budget constraint.
	These experiments demonstrate that in addition to the privacy guarantees,
	our algorithms are also
	useful, in the sense that they find almost as many members of
	the targeted subpopulation as their non-private
	counterparts. The experiments allow us to quantify the loss
	in effectiveness incurred by the gain in privacy.
\end{enumerate}

We note that although our class of network search algorithms is relatively broad, it necessarily excludes some natural and commonly used algorithms. This is by design, since some algorithms are clearly in conflict with the kind of privacy we wish to protect for protected individuals.

%As a simple example, consider a toy model in which \ar{Removing this example for space, and since it seemed a bit confusing -- ``informant status'' is outside our model.}
%a small number of individuals in the protected population are ``informants'', and their
%contacts include members of the targeted population.
%One tempting method would be to immediately identify the targeted neighbors of the informants in the graph. Doing so, however, immediately reveals their informant status, which is a violation of privacy since they are intended to be members of the protected population. Doing this can clearly have negative consequences both on the informants, as well as on the data analyst, who may find her ability to recruit informants in the future to be diminished.

To our knowledge, our formal framework is the first to introduce
explicit protected and targeted subpopulations with qualitatively
differing privacy rights,\footnote{This is in contrast to the
  quantitative distinction proposed by Dwork and McSherry \cite{DM10},
  which still does not allow for the explicit discovery of
  targeted individuals.} and our algorithms the first to provide
mathematically rigorous privacy guarantees for the protected while
still allowing effective discovery of the targeted. More generally, we
believe our work is a first step towards richer privacy models
that acknowledge and manage the tensions between different levels of
privacy guarantees to different subgroups.
% in complex social problems. \gy{Maybe we should have the usual DP vs. anonymization (k-anonymity, l-diversity) discussion somewhere? I guess, Aaron might be doing this in "related work".}

\section{Preliminaries}
%% \sw{structure \& checklist}
%% \begin{enumerate}
%% \iffalse
%% \item each node has a ``hidden bit'' that determines the subpopulation membership

%% \begin{itemize}
%% \item  not easily observed but could be discovered at some cost;
%% \item  formally define hidden bit queries: hidden bit query also
%% incurs privacy cost, we are trying to limit such privacy invasion but
%%  not promising no hidden bit privacy cost on the protected. we need to
%%  limit the invasion of background check operations on the protected population
%% \end{itemize}
%% \item standard definition of differential privacy
%% \item definition of differential privacy for the protected subpopulation
%% \begin{enumerate}
%% \item motivate different neighboring relations between social networks: here we only aim to
%% protect the change of any protected individual's data
%% \item point out that this is novelty to standard differential privacy literature
%% \end{enumerate}

%% \item have a discussion about the tradeoff between the two privacy notions.
%% At one extreme, we could query each node's hidden bit and guarantee
%% perfect differential privacy. At the other extreme, we might reveal a
%% great deal of information about the some protected individual through
%% our network analysis (need more elaboration here).

%% \end{enumerate}
%% \fi

Consider a social network in which the individuals are
partitioned into a~\emph{targeted subpopulation} $\cT$ and
a~\emph{protected subpopulation} $\cP$. Individuals correspond to the vertices $V$ in the network, and the private data of each individual $v$ is the set of edges incident to $v$. Each individual also has an immutable
{\em status bit\/} which specifies to which subpopulation the individual belongs.  We assume that the value of this bit is not easily
observed, but can be discovered through (possibly costly)
investigation. Our goal is to develop search algorithms to identify
members of the targeted subpopulation, while preserving the privacy of the edge set of the
protected population.
%In particular, this implies that the presence or absence of a protected node in the network is hidden.

Any practical algorithm must operate under an investigation {\em budget\/}, which limits the number of status bits
that are examined. Our goal is a total number of status bit examinations that is
on the order of the size of the targeted subpopulation $\cT$, which may be much smaller than the size of the protected population $\cP$.
This is the source of the tension we study --- because the budget is limited, it is necessary to exploit the private edge set to 
guide our search (i.e. we cannot simply investigate the entire population), 
but we wish to do so in a way that does not reveal much about the edges incident to any specific individual.

The privacy guarantee we provide is a variant of~\emph{differential
  privacy}, an algorithmic definition of data privacy. It formalizes
the requirement that arbitrary changes to a single individual's
private data should not significantly affect the output distribution
of the data analysis procedure, and so guarantees that the analysis
leaks little information about the private data of any single
individual. We first introduce the definition of differential privacy
specialized for the network setting.\footnote{This definition is also
  known as vertex differential privacy, and is the strongest version
  of differential privacy for networks that is used in the literature
  (cp. edge differential privacy). It is a variant of a slightly more
  general original definition of differential
  privacy~\cite{DMNS06}. Vertex differential privacy was first defined
  by Hay et al.~\cite{HLMJ09} and later studied by Kasiviswanathan et
  al.~\cite{KNRS13} and Blocki et al.~\cite{BBDS13}.} We treat
networks as a collection of vertices representing individuals, each
represented as a list of its edges (which form the private data of
each vertex).  For a network $G$ and a vertex $v$, let $D_v(G)$ be the
set of edges incident to the vertex $v$ in $G$.  Let $\mathcal G_n$ be
the family of all $n$-vertex networks.

\begin{definition}[Vertex Differential Privacy~\citep{DMNS06,HLMJ09}]
  The networks $G, G'$ in $\mathcal G_n$ are~\emph{neighboring} if one
  can be obtained from the other by an (arbitrary) rewiring of the
  edges incident to a single vertex $v$ --- i.e. if for some vertex
  $v$, $D_u(G) \setminus \{(u,v)\}= D_u(G') \setminus \{(u,v)\}$ for
  all $u \neq v$.  An algorithm $\cA \colon \mathcal G_n \rightarrow
  \cO$ satisfies $\eps$-differential privacy if for every event $S
  \subseteq \cO$ and all neighboring networks $G, G'\in \mathcal G_n$,
 \[ \Pr[\cA(G) \in S] \leq e^\eps \Pr[\cA(G') \in S].  \]
\end{definition}

Differential privacy is an extremely strong guarantee --- it has many interpretations (see discussion in e.g. \cite{DR14}), but most straightforwardly, it promises the following: simultaneously for every individual $i$, and simultaneously for any event $S$ that they might be concerned about, 
event $S$ is \emph{almost} no more likely to occur given that individual $i$'s data is used in the computation, 
compared to if it were replaced by an arbitrarily different entry.
%In particular, ``bad'' events that are extremely unlikely to befall individual $i$ in the absence of the use of his data remain unlikely even with the use of his data, given that it is used subject to the guarantee of differential privacy.
Here, ``almost no more likely'' means that the probability that the bad event $S$ occurs has increased by a multiplicative factor of at most $e^{\epsilon}$, which we term the \emph{risk multiplier}. As the privacy parameter $\epsilon$ approaches $0$, the value of the risk multiplier approaches $1$, meaning that agent $i$'s data has no effect at all on the probability of a bad outcome.
The smaller the risk multiplier, the more meaningful the privacy guarantee. It will be easier for us to reason directly about the privacy parameter $\epsilon$ in our analyses, but semantically it is the risk multiplier $e^{\epsilon}$ that measures the quality of the privacy guarantee, and it is this quantity that we report in our experiments.

%but among other things bounds the effect that any person $i$'s data has on any observer's posterior belief about person $i$'s data, in the \emph{worst case} over the observer's prior. The guarantee becomes stronger as $\epsilon$ becomes smaller -- in the limit, when $\epsilon = 0$, it requires that the algorithm be completely independent of agent $i$'s data.

\mk{This paragraph is dense and confusing, and has lots of run-on
  sentences.  It is a crucial paragraph for the reader's
  understanding. Aaron can you try to
  streamline/clarify?}\ar{Attempted rewording:} Differential privacy
promises the same protections for~\emph{every} individual in a
network, which is incompatible with our setting. We want to be able to
identify members of the targeted population, and to do so, we want to
be able to make arbitrary inferences from their network data.
Nevertheless, we want to give strong privacy guarantees to members of
the protected subpopulation. This motivates our variant of
differential privacy, which redefines the neighboring
relation.\footnote{This is in contrast to other kinds of relaxations
  of differential privacy, which relax the worst-case assumptions on
  the prior beliefs of an attacker as in Bassily et al.~\cite{BGKS13},
  or the worst-case collusion assumptions on collections of data
  analysts as in Kearns et al.~\cite{KPRU14}.}
%We assume that
%membership in the protected subpopulation (vs. the targeted
%subpopulation) is an immutable characteristic, that cannot vary
%between inputs\gy{I think this discussion is a little subtle and requires good understanding of DP (i.e. what different inputs are here). Maybe just say that we don't protect privacy of the "status bit"?} (equivalently, we do not promise to protect the privacy
%of an individual's membership in the targeted
%subpopulation).
In contrast to the definition of neighbors given above, we now say that 
two networks are neighbors if one can be obtained from the other by arbitrarily re-wiring the edges incident to \emph{a single member of the protected population}. Crucially, networks are not considered to be neighbors if they differ in either:
\begin{enumerate}
\item The way in which they partition vertices between the protected and targeted populations $\cP$ and $\cT$, or
\item In \emph{any} edges that connect pairs of vertices $u, v \in \cT$ that are both members of the targeted population.
\end{enumerate}
 
What this means is that we are offering no
guarantees about what an observer can learn about either the status of
an individual (protected vs. targeted), or the set of edges incident to 
targeted individuals. However, we are still promising that no observer can learn much about the set of edges incident to 
any member of the protected subpopulation. This naturally leads us to the following definition:

\begin{definition}[Protected Differential Privacy]
	Two networks $G, G'$ in $\mathcal G_n$ are \emph{neighboring}
        if they:
	\begin{enumerate}
	 \item Share the same partition into $\cP$  and $\cT$, and
	  \item $G$ can be obtained from $G'$ by rewiring the set of edges incident to a single vertex $v \in \cP$.
	 \end{enumerate}
	 An algorithm $\cA \colon \cG_n \rightarrow \cO$ satisfies
	$\eps$-\emph{protected differential privacy} if for any two
	neighboring networks $G, G'\in \cG_n$, and for any event $S\subseteq
	\cO$:
	\[
	\Pr[\cA(G) \in S] \leq e^\eps\Pr[\cA(G') \in S].
	\]
\end{definition}

Formally, our network analysis algorithms take as input a network and a method by which they may query whether vertices $v$ are members of the protected population $\cP$ or not. The class of algorithms we consider are network search algorithms --- they aim to identify some subset of the targeted population. Our privacy guarantees are oblivious as to what action is taken on the identified members (for example, in a medical application they might be quarantined, in a security application they might be arrested, etc.), but we assume that whatever action is taken might be observable. Hence, without loss of generality we can abstract away the action taken and simply view the output of the mechanism to be an ordered list of targeted individuals.

%We emphasize that changing \emph{all} of the edges incident to a single \emph{protected} individual results in a neighboring network (and hence we require that such a change lead to almost no observable difference in the behavior of the algorithm), but a change to even a single edge between two members of the targeted population does not lead to a neighboring network, and hence the algorithm is permitted to change its behavior in arbitrary ways in response to such a change.

\begin{figure}
  \centering
  \includegraphics[scale=0.35]{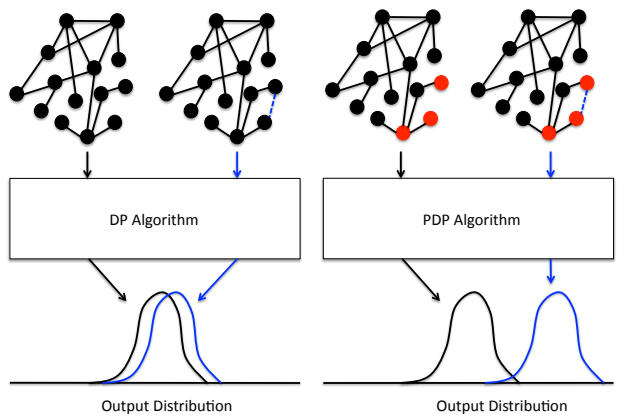}
  \caption{
	Informal illustration of standard Differential Privacy (DP) versus
	Protected Differential Privacy (PDP).
        For algorithms satisfying standard DP (left), the addition of a single edge (dashed blue)
	can alter the output distribution by only a small amount.
	PDP is similar, except we introduce a targeted subpopulation (highlighted in red). If the added edge is between
	two targeted individuals,
	the output distribution may change
	arbitrarily, reflecting the fact that the targeted parties may not enjoy privacy protection.
	The formal definitions are stronger, in that privacy for protected individuals must be preserved
	even if any number of edges to them is added or deleted.}
  \label{fig:PDP}
\end{figure}

\section{Algorithmic Framework}
\iffalse
\begin{enumerate}
\item introduce SoS
\item sketch our algorithms
\item {state the understandable version of the main theorem}
\end{enumerate}
\fi

The key element in our algorithmic framework is the notion
of a~\emph{Statistic of Proximity} (SoP), a network-based measure of how
close an individual is to another set of individuals in a network. Formally, an SoP
is a function $f$ that takes as input a graph $G$, a vertex $v$ and a set of
targeted vertices $S \subseteq \cT$, and outputs a numeric value $f(G, v, S)$. Examples
of such functions include the number of common neighbors between $v$ and
the vertices in $S$, and
the number of short paths from $v$ to $S$.
%;
%the number of triangle sub-graphs formed by $v$ and the
%  nodes in $S$.

Algorithms in our framework rely on the SoP to prioritize which status bits
to examine. Since the value of the SoP depends on the protected vertex's
private data, we perturb the values of the SoP by adding
noise with scale proportional to its~\emph{sensitivity}, which
captures the magnitude by which a single protected vertex can affect the
SoP of some \emph{targeted} vertex. Let $G \sim G'$ denote two neighboring
networks in $\cG$. The sensitivity of the SoP $f$ is defined as:
\[
\Delta(f) = \max_{G\sim G', t \in \cT, S\subseteq \cT} \left |f(G , t, S) -
f(G', t, S)\right|.
\]
%\gy{Changed $S\subseteq V$ to $S \subseteq \cT$ in the definition of targeted sensitivity.}

Crucially, note that in this definition --- in contrast to what is
typically required in standard differential privacy --- we are only
concerned with the degree to which a protected individual can affect
the SoP of a targeted individual.

We next describe the non-private version of our targeted search algorithm $\mc(k, f)$.
For any fixed SoP $f$, $\mc$ proceeds
in $k$ rounds, each corresponding to the identification of a new connected component in the subgraph induced by $\cT$. The algorithm must be
started with a ``seed vertex'' --- a pre-identified member of the targeted population.
Each round of the algorithm consists of two steps:

%% \sw{TODO: cartoon to demonstrate the algorithm} \mk{Possible
%%   cartoon: a series of snapshots using color coding to show the
%%   progress of the non-private and private algos over time, as more
%%   investigations are performed (e.g. first snapshot after budget X
%%   has been used, next after budget Y > X, etc). At each step,
%%   vertices are annotated by the true SoP (non-private) and the
%%   noisy SoP (private), again color-coded, and the terrorists found
%%   by the two algos are also shown by color-coded vertices. Would be
%%   ideal to show this on the actual experimental data (or a small
%%   fragment thereof).}

\begin{enumerate}
\item \emph{Statistic-First Search}: Given a seed targeted vertex, the
  algorithm iteratively grows a discovered component of targeted vertices, by examining, in order of their SoP values, the vertices that neighbor the previously discovered targeted vertices. This continues
   until every neighbor of the discovered members of the
  targeted population has been examined, and all of them have been
  found to be members of the protected population. We note that this
  procedure discovers every member of the targeted population that is
  part of the same \emph{connected component} as the seed vertex, in the
  subgraph induced by only the members of the targeted population.

\item \emph{Search for a New Component}: \mk{The word ``seed''
  should be reserved for only the original targeted vertices, not new
  components found by the algorithm. We should pick something else, please fix here and globally.}\ar{Changing to ``initial vertex''}
  Following the
  completion of statistic-first search, the algorithm must find a new vertex in the targeted population to serve as an initial vertex to begin a new round of statistic-first search.  To do this, the
  algorithm computes the value of the SoP $f$ evaluated on each
  unexamined vertex, using as the input set $S$ the set of already
  discovered members of the targeted population.  It then sorts
  all of the vertices in decreasing order of their SoP value,
  and begins examining them in this order. The first vertex that
  is found to be a member of the targeted population is used as an initial
  vertex in the next iteration (taking the place of our seed vertex). We skip this search procedure in the
  last iteration.\footnote{In the Technical Appendix, we % Supporting Information, we
    present a slight variant of this procedure that allows the search
    algorithm to halt if it is unable to find any new targeted vertices
    after some number of examinations.\sw{added} }

%\sw{TODO: mention in SI, we introduce a version where we could halt the search privately}

%following the last step, the
%  algorithm continues to search for targeted vertices outside of the
%  discovered targeted components. We first compute the SoP $f$ for
%  each node that has not been investigated relative to the discovered
%  targeted nodes, and then add noise sampled from the Laplace
%  distribution $\Lap(\triangle(f)/\eps)$ to each value.\footnote{We
%    use $\Lap(b)$ to denote the Laplace distribution centered at 0
%    with probability density function: $\Pr(x) =
%    \frac{1}{2b}\exp\left(-\frac{|x|}{b} \right)$.} The noisy SoP's
%  give a ranking among the nodes, and the algorithm will investigate
%  the nodes in the order of such ranking until it finds a new targeted
%  node, which will be the new seed node for next round. We halt
%  algorithm if we are unable to find any new targeted nodes after some
%  number of status bit queries,\footnote{This part also needs to
%    checked privately.} and we also skip this search procedure in the
%  last iteration.
\end{enumerate}
The algorithm outputs discovered targeted individuals as they are found, and so its output can be viewed as being an ordered list of targeted individuals.

The private version of the targeting algorithm $\pmc(k, f, \eps)$, is
a simple variant of the non-private version. The statistic-first search
stage remains unchanged, and only the search for a new component is
modified. In the private variant, when the algorithm computes the
value of the SoP $f$ on each unexamined vertex, it then perturbs each
of these values with noise sampled from the Laplace
distribution\footnote{We use $\Lap(b)$ to denote the Laplace
  distribution centered at 0 with probability density function:
  $\Pr(x) = \frac{1}{2b}\exp\left(-\frac{|x|}{b} \right)$.}
$\Lap(\triangle(f)/\eps)$ where $\eps$ is a parameter. Finally, it
examines the vertices in sorted order of their \emph{perturbed} SoP
values.

We prove the following, deferring details of the proof and the
algorithm to the 
Technical Appendix:
% Supporting Information:
\begin{theorem}
\label{thm:main}
Given any $k \ge 1$ and $\eps >0$ and a fixed SoP $f$, the algorithm $\pmc(k,
f, \eps)$ recovers $k$ connected components of the subgraph induced by the targeted vertices and satisfies $((k-1)\cdot \eps)$-protected differential privacy.
\end{theorem}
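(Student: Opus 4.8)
The plan is to treat the two conclusions separately. Correctness is the easier half: I would first argue that a single round recovers exactly one connected component of the subgraph induced on $\cT$. Statistic-first search is an exhaustive best-first traversal that, from an initial targeted vertex, examines every neighbor of every already-discovered target; by a standard reachability argument it therefore uncovers precisely those targeted vertices joined to the initial vertex by a path through targeted vertices, i.e. its entire connected component in the induced subgraph on $\cT$. Since each successful invocation of the search-for-a-new-component step returns an initial vertex lying in a component not yet explored, the $k$ rounds recover $k$ distinct components.

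For privacy, the key structural observation is that the protected-neighboring relation leaves the subgraph induced on $\cT$ untouched: rewiring the edges of a single $v \in \cP$ neither changes the partition into $\cP$ and $\cT$ nor any edge between two targeted vertices. Hence on neighboring networks $G \sim G'$ the component decomposition of $\cT$ is identical, and so is the set of targets that statistic-first search returns from any fixed initial vertex. This lets me route all dependence of the output on protected edges through the $k-1$ invocations of the search-for-a-new-component step; the last round omits this step, which is exactly why the exponent is $(k-1)\eps$ and not $k\eps$.

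Next I would show each such invocation is $\eps$-protected differentially private in isolation. Because vertices are examined in decreasing order of perturbed SoP value and the first targeted vertex met is selected, the chosen initial vertex is exactly the $\argmax$ over the \emph{unexamined targeted} vertices of $f(G,\cdot,S) + \Lap(\Delta(f)/\eps)$ --- the protected vertices scanned en route are skipped and never enter the observable output, and the set of unexamined targeted vertices is itself neighbor-invariant, since it equals $\cT$ minus the (invariant) already-discovered components. This is precisely the Report-Noisy-Max mechanism on scores of sensitivity $\Delta(f)$, since by definition of $\Delta(f)$ a protected rewiring shifts each targeted vertex's score by at most $\Delta(f)$; calibrating the Laplace noise to $\Delta(f)/\eps$ makes the selection $\eps$-differentially private. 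Viewing the $k$ rounds as an adaptive composition --- each round's state (discovered targets and the argument $S$) is a function of the previously selected initial vertices --- the basic composition theorem then yields $((k-1)\cdot\eps)$-protected differential privacy.

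The step I expect to be the main obstacle is making this routing rigorous at the level of the realized output, since the traversal order \emph{within} a component is driven by \textbf{unperturbed} SoP values, which do depend on protected edges. The clean resolution is to fix the output granularity at the sequence of recovered components: by the induced-subgraph invariance, each round's emitted component depends on $G$ only through its (already-accounted-for) initial vertex, so conditioned on the selected initial vertices the remainder of every round is a neighbor-invariant post-processing that spends no additional privacy. Verifying that this granularity is without loss --- that nothing an observer sees beyond the component memberships leaks protected edges --- is the delicate point the formal proof must pin down.
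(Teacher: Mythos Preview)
Your decomposition---statistic-first search contributes zero privacy cost, each new-component search is an instance of Report Noisy Max over the (neighbor-invariant) set of unexamined targeted vertices, and basic composition over the $k-1$ such calls yields $(k-1)\eps$---is exactly the structure of the paper's proof. Your key observation that the protected vertices scanned during new-component search are invisible in the output, so that the selection reduces to an $\argmax$ over targets only, is precisely how the paper casts $\searc$ as Report Noisy Max.

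The one place you diverge is the handling of the within-component traversal order, and here the paper's resolution is different from yours. You propose to coarsen the observable output to the sequence of recovered components and then argue this coarsening is without loss; but as you yourself flag, that last step is delicate---if the emitted list is genuinely ordered and the SoP used inside statistic-first search has nonzero targeted sensitivity, the order \emph{does} leak protected edges, and the coarsening is \emph{not} without loss. The paper sidesteps this entirely: in the formal algorithm (Technical Appendix), the SoP used inside $\bfs$ is fixed to be ${\sf Path}_1$, which has targeted sensitivity zero. With that choice, the ordered list of targets produced by $\bfs$ is literally identical on neighboring graphs (examining a protected vertex never updates $\tilde T$, so the relative ranking of targets---determined solely by edges in $E(\cT)$---is unchanged), and $\bfs$ is $0$-protected differentially private on the nose, with no need to coarsen. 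Your route would work if you either change the output specification or restrict the within-component SoP; the paper does the latter.

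A minor calibration point: Report Noisy Max with Laplace noise $\Lap(\Delta/\eps)$ is $\eps$-private for one-sided (monotone) score sensitivity but $2\eps$-private for the two-sided sensitivity that $\Delta(f)$ encodes here. The paper's detailed version of $\searc$ accordingly scales the noise up (to $\Lap(4\Delta(f)/\eps)$, also absorbing a privately checked stopping rule via an additional Laplace mechanism on a count with sensitivity $\IC(f)$). This does not affect the structure of your argument, only the constant in front of $\eps$.
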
%\gy{Changed the theorem statement.}

\mk{This sentence is confusing; first we haven't used the term
  statistic-first search, and second, it makes it seem like the algo
  is not randomized. What point are we trying to make?}  \ar{Statistic
  first search is what ``statistic first search'' was called in an
  earlier draft -- changing it back to be consistent. This sentence is
  trying to point out to people who know a bit about DP what is
  surprising in our setting.}
There are two important things to note about this theorem.
First, we obtain a privacy guarantee despite the fact that the
statistic-first search portion of our algorithm is not randomized ---
only the search for new components employs randomness. Second, the
privacy cost of the algorithm grows only with $k$, the number of
disjoint connected components of targeted individuals (disjoint in the
subgraph defined on targeted individuals), and \emph{not} with the
total number of individuals examined, or even the total number of
targeted individuals identified. Hence, the privacy cost can be very
small on graphs in which the targeted individuals lie only in a small
number of connected components or ``cells''. Both of these features are unusual
when compared with typical guarantees that one can obtain under the
standard notion of differential privacy.

\begin{figure*}[!]
  \centering
 \includegraphics[scale=0.30]{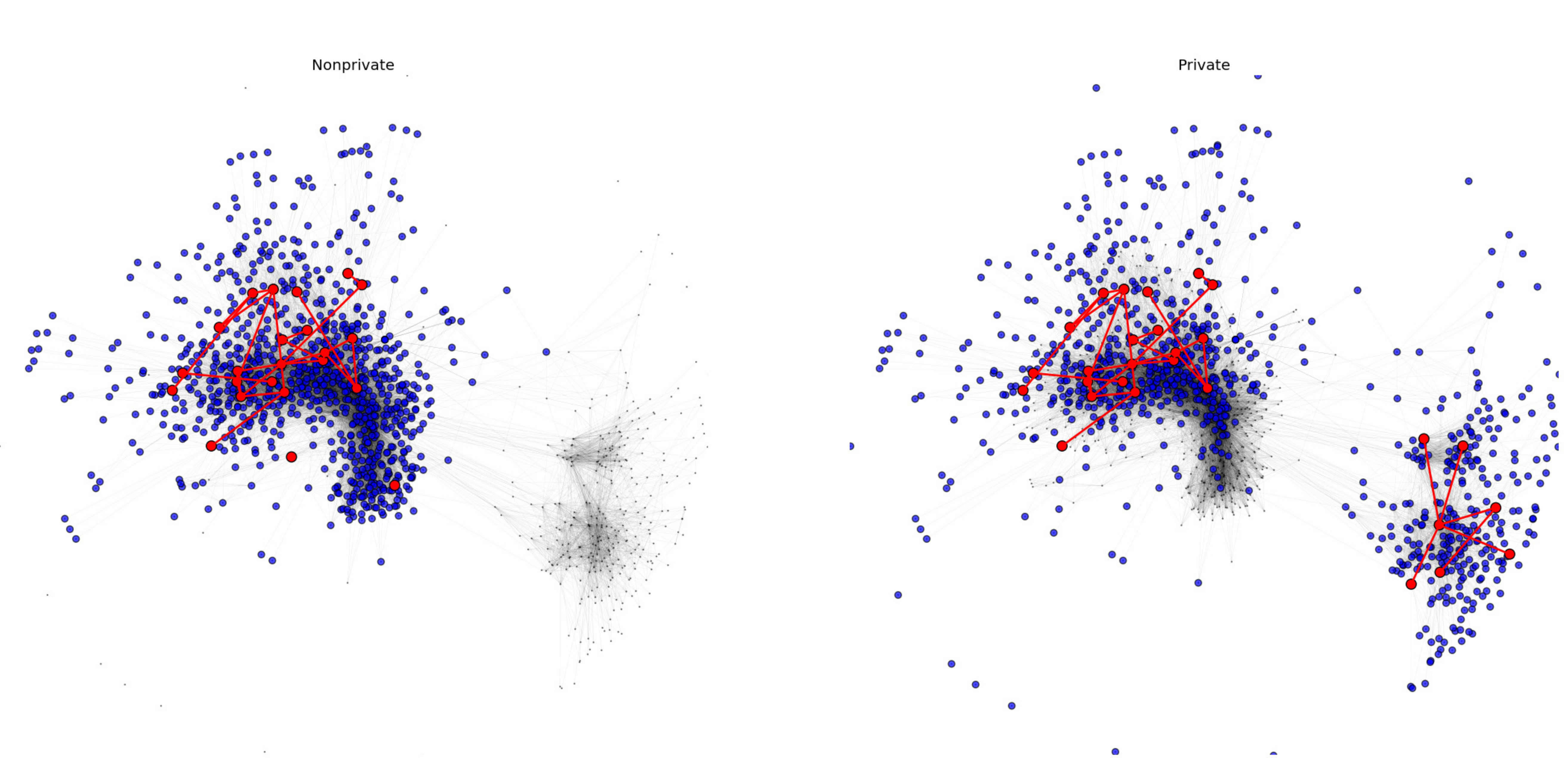}
 \caption{ \mk{This figure is orphaned somewhat far from the
      reference to it, can we try to fix?}  Visual comparison of the
    non-private algorithm $\mc$ (left panel) and the private algorithm
    $\pmc$ (right panel) on a small portion of the IMDB network (see
    Experimental Evaluation for more details). For each algorithm,
    blue indicates protected vertices that have been examined, red
    indicates targets that have been examined, and gray vertices have
    not been examined yet.  Both algorithms begin with the same
    seed target vertex, and by directed statistic-first search discover a subnetwork of
    targeted individuals (central red edges). As a consequence, many
    protected vertices are discovered and examined as well. Due to the
    added noise, $\pmc$ explores the network in a more diffuse
    fashion, which in this case permits it to find an additional
    subnetwork of targets towards the right side of the network. The
    primary purpose of the noise, however, is for the privacy of
    protected vertices.  }
  \label{fig:comparison}
\end{figure*}

%To demonstrate the effects of randomization in our algorithm $\pmc$,
%we also consider the natural non-private counterpart algorithm
%$\mc$. In particular, the algorithm $\mc(k, f)$ works almost the same
%as $\pmc(k, f, \eps)$, except that it does not add noise when
%accessing the SoP. The two algorithms will then investigate the vertices
%in slightly different orders. In Figure~\ref{fig:comparison}, we give a
%network visualization to illustrate the different orderings.

\mk{Wouldn't it be better to start by describing the non-private, and
then discuss how the private algo modifies it?}\ar{Changed}
Because $\pmc$ adds randomness for privacy, it results in examining a different set of vertices as compared to
$\mc$. Figure~\ref{fig:comparison} provides a sample visualization of the
contrasting behavior of the two algorithms.
While theorems comparing the utility of $\mc$ and $\pmc$ are possible, they
require assumptions ensuring that the chosen SoP is sufficiently
``informative'', in the sense of separating the targeted from the
protected by a wide enough margin.  In particular, one needs to rule
out cases in which all unexplored targeted vertices are deemed closer
to the current set than all protected vertices, but only by an
infinitesimal amount, in which case the noise added by $\pmc$
eradicates all signal. In general such scenarios are unrealistic, so
instead of comparing utility theoretically, we now provide an
extensive empirical comparison.

\iffalse
The targeting algorithm would alternate between two phases. In one
phase, the algorithm tries to find a terrorist component via $\bfs$
starting at a identified terrorist vertex. In the other phase, the
algorithm will look for a new terrorist that is not an immediate
neighbor of the discovered terrorist components. The formal
description of the algorithm in presented in~\Cref{alg:private}.
\sw{present it as an iterative algorithm}
\fi

\section{Experimental Evaluation}

In this section we empirically demonstrate the utility of our private
algorithm $\pmc$ by comparing its performance to its non-private
counterpart $\mc$. We report on computational experiments performed on
real social network data drawn from two sources --- the paper
coauthorship network of DBLP (``Digital Bibliography and Library
Project'')\cite{dblp}, and the co-appearance network of film actors of
IMDB (``Internet Movie Database'')\cite{imdb} --- whose macroscopic
properties are summarized in Table~\ref{tab:data}.  \mk{please add
  citations to these DBs so people could go find them?}

\begin{table}[ht]

\caption{Social network datasets used in the experiments.}
\label{tab:data} % is used to refer this table in the text
\centering % used for centering table
\begin{tabular*}{0.9\hsize}{c c c c} % centered columns (4 columns)
\hline\hline %inserts double horizontal lines
Network & Number of vertices & Number of edges & Edge relation \\ [0.5ex] % inserts table
%heading
\hline % inserts single horizontal line
DBLP          & 956,043   &   3,738,044 & scientific paper co-authorship\\
IMDB    &   235,710   & 4,587,715 & movie co-appearance\\ [1ex]
\hline %inserts single line
\end{tabular*}
\end{table}

These data sources provide us with naturally occurring networks, but not a
targeted subpopulation. While one could attempt to use
communities within each network (e.g. all co-authors within a particular
scientific subtopic), our goal was to perform large-scale
experiments in which the component structure of targeted vertices (which
we shall see is the primary determinant of performance) could be more
precisely controlled.
% Our algorithms make sense when the targeted
% subpopulation forms more than one connected component (so that the
% private new component search is necessary), but when the network
% itself gives some information as to where the components lie (so that
% the SoP provides guidance to the search). 
We thus used a simple parametric stochastic
diffusion process (described in the 
% Supporting Information) to
Technical Appendix) to
generate the targeted subpopulation in each network.
We then evaluate our private
search algorithm $\pmc$ on these networks, and compare its performance
to the non-private variant $\mc$. For brevity we shall describe our results
only for the IMDB network; results for the DBLP network are quite similar.

%The performance of the targeting algorithm crucially relies on the
%component structure of the targeted subpopulation. In order to
%evaluate the comparison between $\pmc$ and $\mc$ across different
%component structures, we use a stochastic diffusion process to
%generate the targeted subpopulation.\sw{deferred the formal definition
%  of the diffusion to appendix}

\iffalse we use the following diffusion process to generate the
targeted subpopulation: the process $\infect(s, p, q, k)$ takes a seed
infected vertex $s$, two values $p, q\in (0, 1)$, number of rounds $k$
as input and does the following: \mk{justify/motivate; also have
  switched here to disease terminology}
\begin{enumerate}
  \item Initially, only the vertex $s$ is infected. In each round $t= 1
    ,\ldots, k$, for each neighbor $v$ of the infected vertex set,
    infect $v$ with probability $p$.
  \item For each vertex $i$ in the infected vertex set, let $i$ be immune
    (non-infected) with probability $q$. The remaining infected vertices 
    will then be broken into different components, and the immune
    vertices will become intermediary vertices.
  %% \item The remaining infected vertices form components of different
  %%   sizes, and we will include the components of size at least 2 into
  %%   the final targeted population.
\end{enumerate}
\fi

In our experiments, we fix a particular SoP: the number of common
neighbors between the vertex $v$ and the subset of vertices $S$
representing the already discovered members of the targeted
population. This SoP has sensitivity $1$, and so can be used in
our algorithm while adding only a small amount of noise. In
particular, the private algorithm $\pmc$ adds noise sampled from the
Laplace distribution $\Lap(20)$ to the SoP when performing new
component search. By~\Cref{thm:main}, such an instantiation of $\pmc$
guarantees $((k-1)/20)$-protected differential privacy if it finds $k$
targeted components.

The main trade-off we explore is the number of members of the targeted
population that are discovered by the algorithms, as a function of the
number of status bits that have been investigated. 
In each of the ensuing plots, the $x$-axis measures the size of the
investigation budget consumed so far, while the $y$-axis measures the
number of targeted vertices identified for a given budget.  In each
plot, the parameters of the diffusion model described above were fixed
and used to stochastically generate targeted subpopulations of the
fixed networks given by our social network data. By varying these
parameters, we can investigate performance as a function of the
underlying component structure of the targeted subnetwork.  As we
shall see, in terms of relative performance, there are effectively
three different regimes of the diffusion model (i.e. targeted
subpopulation) parameter space. In all of them $\pmc$ compares
favorably with $\mc$, but to different extents and for different
reasons that we now discuss. We also plot the growth of the
risk multiplier for $\pmc$, which remains less than 2 in all three regimes.

On each plot, there is a single blue curve showing the performance of
the (deterministic) algorithm $\mc$, and multiple red curves showing
the performance across 200 runs of our (randomized) algorithm $\pmc$.

The first regime (Figure~\ref{fig:case2}) occurs when the largest
connected component of the targeted subnetwork is much larger than all
the other components.  In this regime, if both algorithms begin at a
seed vertex inside the largest component, there is effectively no
difference in performance, as both algorithms remain inside this
component for the duration of their budget and find identical sets of
targeted individuals.  More generally, if the algorithms begin at a
seed outside the largest component, relative performance is a ``race''
to find this component; the private algorithm lags slightly due to the
added noise, but is generally quite competitive.

The second regime (Figure~\ref{fig:case3}) occurs when the component sizes
are more evenly distributed, but there remain a few significantly
larger components. In this setting both algorithms spend more of their
budget outside the targeted subpopulation ``searching'' for these
components. Here the performance of the private algorithm lags more
significantly --- since both algorithms behave the same when inside of
a component, the smaller the components are, the more detrimental the
noise is to the private algorithm.

The third regime (Figure~\ref{fig:case4}) occurs when all the targeted
components are small, and thus both algorithms suffer accordingly,
discovering only a few targeted individuals.

%% Note that it is possible (and witnessed in the figures) that any
%% individual run of the private algorithm will yield superior
%% performance to the non-private algorithm. This is simply due to the
%% randomization of the private algorithm, which is generally harmful to
%% performance, but can also occasionally result in fortunate choices.

%% \mk{throughout should we use the actual algo names instead of calling
%%   them private and non-private?}\sw{I try to use as much $\pmc$ and
%%   $\mc$ as possible}

%\begin{figure*}[H]
\begin{figure*}[h]
   \centering
    \includegraphics[scale=.22]{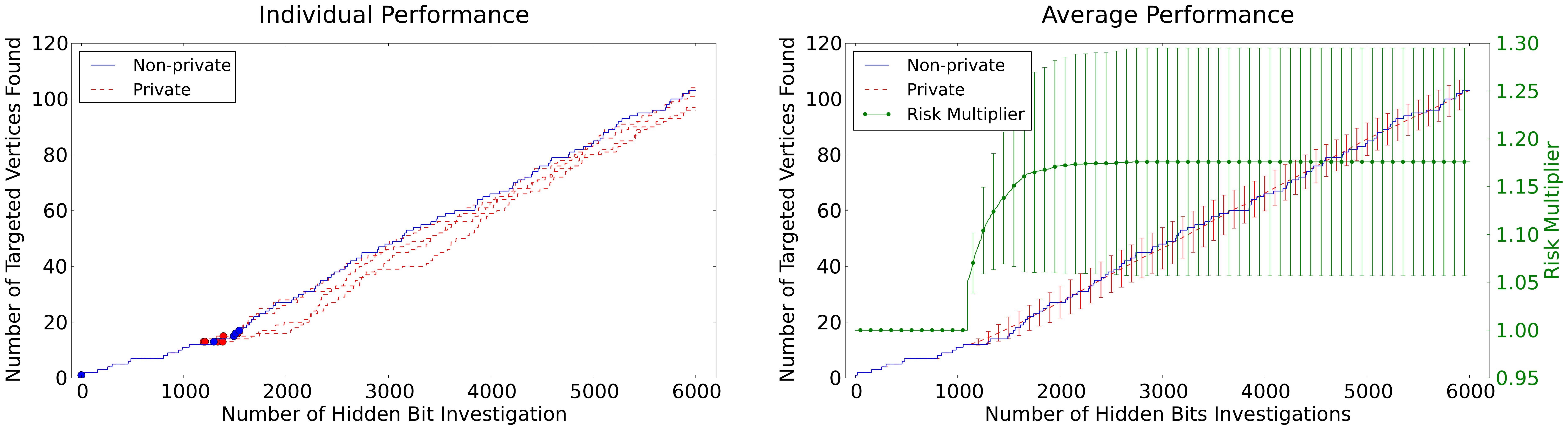}
    \caption{Performance for the cases in which there is a dominant
      component in the targeted subpopulation. In the left panel, we
      show the number of targeted vertices found as a function of the
      budget used for both the (deterministic) non-private algorithm
      $\mc$ (blue), and for several representative runs of the
      randomized private algorithm $\pmc$ (red).  Circles indicate
      points at which an algorithm has first discovered a new targeted
      component.  In the right panel, we show average performance over
      200 trials for the private algorithm with 1-standard deviation
      error bars.  We also show the private algorithm risk multiplier
      with error bars.  In this regime, after a brief initial flurry
      of small component discovery, both algorithms find the dominant
      component, so the private performance closely tracks
      non-private, and the private risk multiplier quickly levels off
      at around only 1.17.}
    \label{fig:case2}
  \end{figure*}

\begin{figure*}[!]
    \centering
    %% \includegraphics[scale=0.25]{experiments/42,19,15.png.eps}
%\vspace{-5 em}
    \includegraphics[scale=.22]{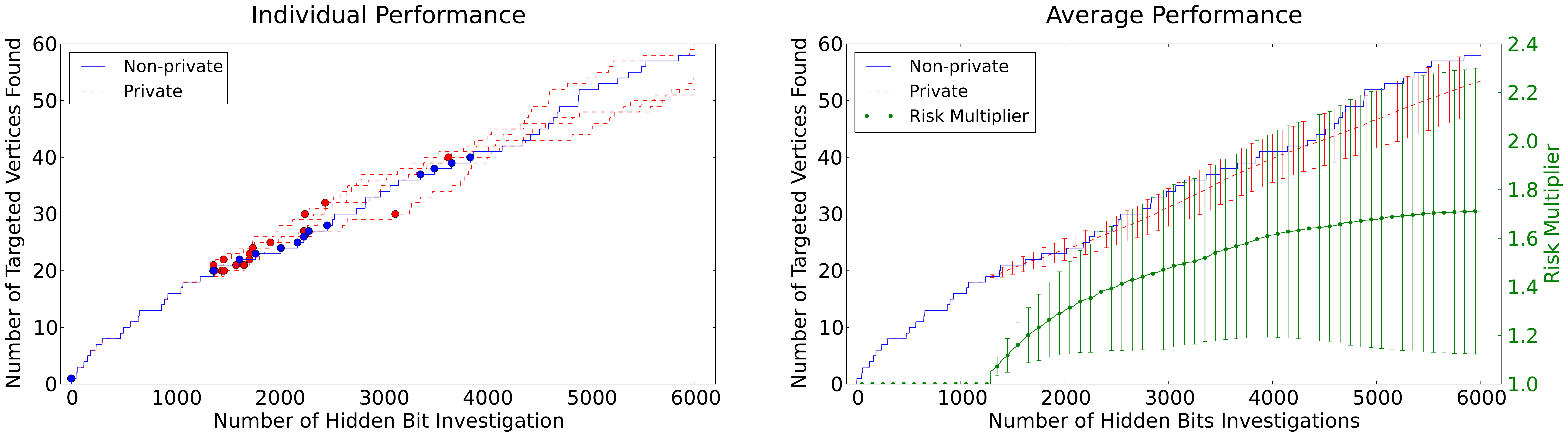}
    \caption{Same format as in Figure~\ref{fig:case2}, but now in
      a case where the component sizes are more evenly distributed,
      but still relatively large. The performance of both algorithms
      is hampered by longer time spent investigating non-targeted
      vertices (note the scale of the $y$ axis compared to
      Figure~\ref{fig:case2}). Targeted component discovery is now
      more diffuse.  The private algorithm remains competitive but
      lags slightly, and as per Theorem~\ref{thm:main} the risk
      multiplier grows as more targeted components are discovered (but
      remains modest).}
    \label{fig:case3}
  \end{figure*}
\begin{figure*}[!]
 \centering
    %% \includegraphics[scale=0.25]{experiments/2,2,2.png.eps}\\
    %% \includegraphics[scale=0.25]{experiments/10,9,4.png.eps}
% \vspace{-5 em}
    \includegraphics[scale=.22]{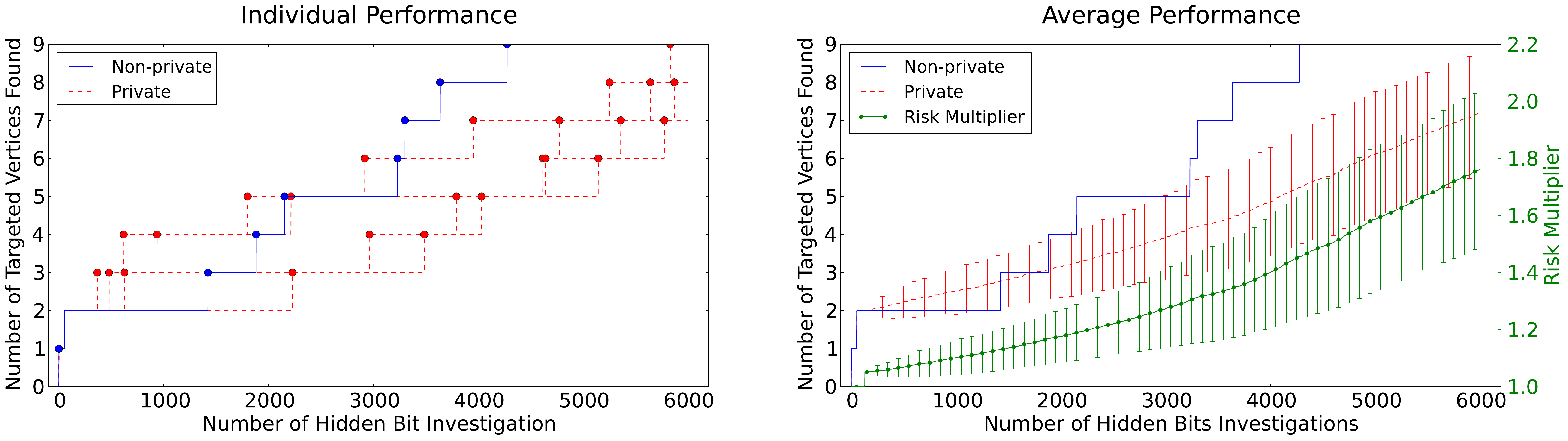}
    \caption{A case with a highly fragmented targeted
      subpopulation. Both algorithms now spend most of their budget
      investigating non-targeted vertices and suffer accordingly.}
    \label{fig:case4}
  \end{figure*}

%\vspace{-3em}

\iffalse        
\include{Appendix/discussion}
\input{Appendix/flow}
\fi

\pagebreak

\bibliographystyle{plain}
%\bibliography{./refs}
\bibliography{Preprint-back}

\begin{thebibliography}{10}

\bibitem{BGKS13}
Raef Bassily, Adam Groce, Jonathan Katz, and Adam Smith.
\newblock Coupled-worlds privacy: Exploiting adversarial uncertainty in
  statistical data privacy.
\newblock In {\em Foundations of Computer Science (FOCS), 2013 IEEE 54th Annual
  Symposium on}, pages 439--448. IEEE, 2013.

\bibitem{BBDS13}
Jeremiah Blocki, Avrim Blum, Anupam Datta, and Or~Sheffet.
\newblock Differentially private data analysis of social networks via
  restricted sensitivity.
\newblock In {\em Proceedings of the 4th conference on Innovations in
  Theoretical Computer Science}, pages 87--96. ACM, 2013.

\bibitem{NAS}
National~Research Council.
\newblock {\em Bulk Collection of Signals Intelligence: Technical Options}.
\newblock The National Academies Press, Washington, DC, 2015.

\bibitem{dblp}
DBLP.
\newblock {DBLP} computer science bibliography, 2014.
\newblock Avaialable at
  \url{http://dblp.uni-trier.de/xml/}{http://dblp.uni-trier.de/xml/}.

\bibitem{DMNS06}
Cynthia Dwork, Frank Mc{S}herry, Kobbi Nissim, and Adam Smith.
\newblock Calibrating noise to sensitivity in private data analysis.
\newblock 2006.

\bibitem{DM10}
Cynthia Dwork and Frank~D McSherry.
\newblock Selective privacy guarantees, October~19 2010.
\newblock US Patent 7,818,335.

\bibitem{DR14}
Cynthia Dwork and Aaron Roth.
\newblock The algorithmic foundations of differential privacy.
\newblock {\em Foundations and Trends in Theoretical Computer Science},
  9(3-4):211--407, 2014.

\bibitem{DRV10}
Cynthia Dwork, Guy~N. Rothblum, and Salil Vadhan.
\newblock Boosting and differential privacy.
\newblock pages 51--–60, 2010.

\bibitem{Snowden}
Glenn Greenwald.
\newblock {\it {NSA} {C}ollecting {P}hone {R}ecords of {M}illions of {V}erizon
  {C}ustomers {D}aily}.
\newblock {\em The Guardian}, June 6 2013.

\bibitem{HLMJ09}
Michael Hay, Chao Li, Gerome Miklau, and David Jensen.
\newblock Accurate estimation of the degree distribution of private networks.
\newblock In {\em Ninth IEEE International Conference on Data Mining. ICDM'09},
  pages 169--178. IEEE, 2009.

\bibitem{imdb}
IMDB.
\newblock Internet movie database, 2005.
\newblock Available at
  \url{http://www3.ul.ie/gd2005/dataset.html}{http://www3.ul.ie/gd2005/dataset.html}.

\bibitem{KNRS13}
Shiva~Prasad Kasiviswanathan, Kobbi Nissim, Sofya Raskhodnikova, and Adam
  Smith.
\newblock Analyzing graphs with node differential privacy.
\newblock In {\em {TCC}}, pages 457--476, 2013.

\bibitem{KPRU14}
Michael Kearns, Mallesh~M Pai, Aaron Roth, and Jonathan Ullman.
\newblock Mechanism design in large games: Incentives and privacy.
\newblock {\em American Economic Review}, 104(5):431--35, 2014.

\bibitem{Ebola}
Reuters.
\newblock {\it {U.S.} {N}urse {Q}uarantined {O}ver {E}bola {C}alls {T}reatment
  ``{F}renzy of {D}isorganization''}.
\newblock {\em The New York Times}, October 25 2014.

\end{thebibliography}
\pagebreak
\appendix 
\part*{Technical Appendix}
\iffalse
\section{Introduction}
Surveillance of suspected terrorists often leads to sacrifice of
lawful citizens' privacy. Consider a scenario in which the
intelligence has successfully identified a seed terrorist
$\mathcal{S}$ inside a social network (of phone calls, emails, etc.).
In order to discover the terrorist sub-network, the intelligence
community needs to investigate the direct contacts of $\mathcal{S}$
and then trace down other subsequent contacts whenever we discover new
terrorists. Even though it is unavoidable that the civilians' data
would be exposed to the central intelligence, we still hope that such
a search procedure could to some extent preserve the privacy of
innocent people. %% In the list of targeted/published terrorists to
%% be \emph{insensitive} to any change of a non-terrorist's private
%% data.

In this paper, we propose a framework to design terrorist targeting
algorithm such that (1) the intelligence community could conduct their
investigations in a relatively unrestricted manner; and (2) the
algorithm can provide some formal privacy guarantees to the
innocents. In addition to privacy constraint, the targeting algorithm
is also subject to budget constraint -- the number of people that the
intelligence runs background checks on needs to be small relative to
the network scale.  There are two reasons to impose such a constraint:
first, the investigation on every person consumes time and human
resources from the intelligence; second, the background check tends to
be invasive in nature and we should limit the investigation on
innocent people for moral reasons.
\fi

\section{Model \& Preliminaries}
We study graph search algorithms which operate on graphs $G = (V, E)$
defined over a vertex set $V$ and edge set $E \subseteq V \times
V$. The vertex set $V$ partitioned into two fixed subsets $V
= \cT \cup \cP$, where $\cT$ represents the \emph{targeted}
subpopulation, and $\cP$ represents the \emph{protected}
subpopulation. The algorithms we consider are initially given
a single \emph{seed vertex} $s \in \cT$ (or several such vertices), and the goal of
the algorithm will be to find as many other members of the targeted
subpopulation $\cT$ as possible.

The algorithm cannot directly observe which subpopulation a particular
vertex $v \in V$ belongs to since otherwise the problem is trivial, but it has the ability to make a query on
a vertex $v \in V$ to determine its subpopulation membership. We model
this ability formally by giving the algorithm access to
an \emph{identity oracle} $\cI\colon V\to \{0, 1\}$, defined such that
$\cI(v) = 1$ if and only if $v \in \cT$. A call to this oracle is the
abstraction we use to represent the possibly costly operation
(instantiated in our example applications by e.g. surveillance, or
medical tests) which determines whether a particular member of the
population is protected or not. Because we view these operations as
expensive, we want our algorithm to operate by making as few calls to
this oracle as possible. Hence, the algorithm must use the network
data represented by the graph $G$ to \emph{guide its search} for which
vertices to query. This creates a source of privacy tension since
the edges in this network are what we view as private information.

Thus our goal is to give algorithms which discover members of the targeted
  population using the edges in the network to guide their search. We
  wish to protect the privacy of the protected individuals: we do not
  want the outcome of the search to reveal too much about the edge set
  incident to any protected individual. However, we want to exploit the
  edges incident to targeted individuals in ways that will not
  necessarily be privacy preserving. The notion of privacy that we
  employ is a variant of \emph{differential privacy}. To formally
  define differential privacy, we consider databases which are
  multisets of elements from an abstract domain $\cX$, representing
  the set of all possible data records. (In our case, the data domain
  $\cX$ can be identified with subsets of the vertex set $V$ -- it
  represents the set of all possible \emph{neighbors} that a vertex
  might be adjacent to in the network).
%% a strong algorithmic notion that formalizes the requirement that
%% any change of an individual's data in the database does not affect
%% the output distribution over the outcomes by much. We first

\begin{definition}[Differential Privacy~\citep{DMNS06}]
Two databases $D, D'\subset \cX$ are \emph{neighbors} if they differ
in at most one data record: that is, if there exists an index $i$ such
that for all indices $j \neq i$, $D_j = D'_j$.  An algorithm
$\cA \colon \cX^n \rightarrow
\cR$ satisfies $(\eps, \delta)$-differential privacy if for every
set of outcomes $S \subseteq \cR$ and for all neighboring databases
$D, D'\in
\cX^n$, the following holds:
  \[
  \Pr[\cA(D) \in S] \leq \exp(\eps) \Pr[\cA(D') \in S] + \delta.
  \]
If $\delta = 0$, we say $\cA$ satisfies $\eps$-differential privacy.
\end{definition}

This notion of privacy is very strong --- indeed, it is \emph{too}
strong for our purposes. It provides a symmetric guarantee that does
not allow the output of the algorithm to change substantially as a
function of any person's data changing. However, in our case, in order to achieve good utility guarantees we want
our algorithm to be allowed to be highly sensitive in the data of the
targeted individuals. We will modify this definition in the following
way: first, we will view the partition of the vertices into protected
and targeted individuals as a fixed, immutable characteristic,
separate from the private data of the individuals, and view the
private data of individual $v$ as being the edges incident on
$v$,\footnote{We do not provide any privacy guarantees for what we can
reveal about $v$'s membership in $\cT$. This is an inherent characteristic of the problem that we consider since the goal of our algorithms is to identify the members of $\cT$.}
\begin{equation}
\label{eq:adj}
D_v(G) = \{(u, v) \in E \mid u\in V \}.
\end{equation}
%% We then redefine the neighboring relation in the setting of
%%  networks: let $\cP$ and $\cT$ be the underlying protected and
%%  targeted subpopulations, and let $\cG(\cP, \cT)$ denote the set of
%%  all possible networks over the vertices $V=(\cP\cup \cT)$ so that
%%  two graphs $G$ and $G'$ are neighbors if $G$ can be derived from
%%  $G'$ by rewiring all of the edges incident to a single protected
%%  vertex $v \in \cP$. (In particular, changing \emph{any} edge
%%  between two vertices $v, v' \in \cT$ results in a graph that
%%  is \emph{not} a neighbor, and hence can affect the behavior of the
%%  algorithm in arbitrary ways).
%% To define our neighboring relation, we will fix the set of targeted
%% nodes $\cT$, and the set of edges connecting the targeted nodes
%% $E(\cT) = \{(u, v)\mid u, v\in \cT\}$ (i.e. the edge set in the
%% subgraph induced on the vertices $\cT$), and only allow the protected
%% sub-network to vary.
We then redefine the neighboring relation in the setting of networks:
for any protected and targeted subpopulations $\cP$ and $\cT$, two
networks $G$ and $G'$ are \emph{neighboring} if $G'$ can be obtained
by only changing a single \emph{protected} node's edges in
$G$. Specifically, $G = (V, E)$ and $G' = (V, E')$
are \emph{neighbors} with respect to a partition $V = \cP \cup \cT$ if
there exists a $v \in \cP$ such that for all $v' \neq v$:
$D_{v'}(G) \cup \{(v, v')\} = D_{v'}(G') \cup \{(v, v')\}$. Note that
for neighboring graphs $G$ and $G'$, the edge sets in the subgraph
induced on the vertices $\cT$ must also be the same.

In the following, we denote the set of all possible networks over the
vertices $V$ by $\cG$, and denote the set of all possible outcomes of an algorithm by $\cO$.

% A rewire operation on a node $v$ could be (1) adding a node $v$ to a
%network; (2) removing the node $v$ entirely; or (3) changing the
%adjacency list of $v$ arbitrarily.\footnote{This is a definition
%adapted from~\cite{KNRS13}.} Intuitively, if an algorithm has
%approximately the same output distributions on any pair of
%neighboring networks, then it would preserve the privacy of
%any \emph{protected} node.  A {targeting algorithm} $\cA_\cI$ has
%oracle access to an identity oracle $\cI$, which takes a network and
%a initial targeted node $\cS$ as input and outputs an \emph{ordered
%list} of terrorist nodes. The ordering of the output list captures
%the temporal aspect of the targeting algorithm --- the algorithm will
%first discover a subset of targeted nodes, and based on their
%positions in the network identify more targeted nodes, which
%inherently induces an ordering on the output.

%% , which captures the requirement that any change of an protected
%% node's data should not affect the output distribution by much.
%% In the following, we let $\cO$ denote the set of all \emph{ordered}
%% lists of targeted individuals. The algorithm 

\begin{definition}[Protected Differential Privacy]
An algorithm $\cA\colon\cG \rightarrow \cO$ satisfies
$(\eps,\delta)$-\emph{protected differential privacy} if for every
partition of $n$ vertices $V$ into sets $\cP$ and $\cT$, for every pair of graphs $G, G'$ that are
neighbors with respect to the partition $(\cP, \cT)$, and for any set
of outcomes $S\subseteq
\cO$
\[
\Pr[\cA(G) \in S] \leq \exp(\eps)\Pr[\cA(G') \in S] + \delta.
\]
If $\delta = 0$, we say $\cA$ satisfies $\eps$-protected differential privacy.
\end{definition}

When the partition $(\cP, \cT)$ is clear from the context we will omit it to simplify the presentation.
In the context of graph search algorithms that we consider here the algorithm $\cA$  is given 
an oracle $\cI$ which encodes the partition $V$ into $\cP$ and $\cT$. We denote such algorithms as $\cA_\cI$. The output $\cO$ in the above definition of \textit{protected differential privacy} in the context of graph search algorithms is an ordered list of targeted individuals.

\iffalse %%% THIS SHOULD BE REDUNDANT NOW
\begin{remark}
In the standard setting, the algorithms are required to protect the
privacy of \emph{every} data entry in the input database.  Here the
input database is a network, and we only aim to protect the privacy of
the protected nodes, while trying to expose the terrorist sub-network
as much as possible. Further, note that the ordering of discovery
among output targeted might also depend on the network structure and
the protected nodes' data. We remark it is important that our guarantee
also prevents privacy leakage from the ordering.
%% \footnote{A civilian accidentally dial a phone number of some ISIS
%% member, but actually intended to order pizza take-out. This might
%% not increase the probability of him being targeted by too much. } }
%% Intuitively, the presence or absence of an non-terrorist link
%% should not affect the outcome of terrorist list by much.
\end{remark}
\fi

\begin{remark}
A careful reader may already have noticed that there is a trivial
  graph search algorithm that achieves $0$-protected differential
  privacy while outputting the entire set of targeted individuals
  $\cT$ --- it simply queries $\cI(v)$ for every $v \in V$, and
  outputs every $v$ such that $\cI(v) = 1$. This algorithm satisfies
  perfect (i.e. with $\epsilon = 0$) protected differential-privacy
  because it operates \emph{independently} of the private network $G$.
  The problem with this approach is that it requires querying the
  status of every vertex $v \in V$, which can be impractical both
  because of cost (the query might itself require a substantial
  investment of resources) and because of societal norms (it may not
  be defensible to subject every individual in a population to
  background checks). Hence, here we aim to design algorithms that use the
  graph data $G$ to effectively guide the search for which vertices
  $v$ to query. This is what leads to the tension with privacy, and
  our goal is to effectively trade off the privacy parameter
  $\epsilon$ with the number of queries to $\cI$ that the algorithm
  must make.
\end{remark}

One way to interpret protected differential privacy is differential privacy applied to an appropriately defined input. Let the algorithm have two
inputs: the set of edges incident to the protected vertices in $\cP$,
and the edges in $E(\cT) = \{(u, v)\mid u, v\in \cT\}$ (i.e. all of
the other edges)\footnote{It is crucial here that such a simplification can only be made for the purposes of the analysis only. Since all our algorithms are only given access to a membership oracle $\cI$ there is no way for them to explicitly construct these two inputs without incurring a cost associated with oracle queries.}. In this view, protected differential privacy only
requires the algorithm to be differentially private in its first
argument, and not in its second. This view, formalized in the
following lemma, will allow us to apply some of the basic tools of
differential privacy in order to achieve protected differential
privacy.

\begin{lemma}
\label{lem:pdp}
Given a graph $G = (V, E)$ and a partition its edges into $E_2 = E(\cT)$
and $E_1 = E \setminus E_2$ an algorithm $\cA_{\cI}(G)
:= \cA_{\cI}(E_1,E_2)$ satisfies $(\eps,\delta)$-protected
differential privacy if it is $(\eps,\delta)$-differentially private
in its first argument.
%For each $G\in \cG$, let the \emph{associated protected database} be
%$\cD(G) = \{\cD_v(G) \mid v\in \cP\}$, where $\cD_v(G)$ denotes the
%adjacency list of $v$ in $G$ defined in~\eqref{eq:adj}. Suppose that
%an algorithm $\cA$ is $(\eps, \delta)$-differentially private with
%respect to every pair of neighboring protected databases $\cD$ and
%$\cD'$, then $\cA$ also satisfies $(\eps , \delta)$-protected
%differential privacy.
\end{lemma}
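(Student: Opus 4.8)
The plan is to treat \Cref{lem:pdp} as a definitional unpacking: I will show that a single protected-DP neighboring step on graphs corresponds exactly to fixing the second argument $E_2 = E(\cT)$ and replacing the first argument $E_1$ by a neighbor, so that differential privacy in the first argument immediately yields protected differential privacy. First I would fix an arbitrary partition $(\cP, \cT)$ and observe that the map $G \mapsto (E_1, E_2)$, with $E_2 = E(\cT)$ and $E_1 = E \setminus E_2$, is a bijection onto valid pairs of edge sets (those with $E_2 \subseteq E(\cT)$ and $E_1 \cap E(\cT) = \emptyset$). Hence $\cA_\cI(G)$ and $\cA_\cI(E_1, E_2)$ are literally the same object, and the two-argument view loses no information.

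The key step is to analyze what protected-DP neighboring does to the pair $(E_1, E_2)$. Let $G, G'$ be neighbors with respect to $(\cP, \cT)$, so there is a vertex $v \in \cP$ with $D_{v'}(G) \cup \{(v,v')\} = D_{v'}(G') \cup \{(v,v')\}$ for all $v' \neq v$. I would first argue that $E_2(G) = E_2(G')$: any edge $(u,w)$ with $u, w \in \cT$ has both endpoints distinct from $v$ (since $v \in \cP$), so the neighboring condition forces $(u,w) \in E(G) \iff (u,w) \in E(G')$; thus the targeted--targeted edge set is invariant and the second argument is held fixed. I would then argue that $E_1(G)$ and $E_1(G')$ differ only in edges incident to $v$: the same condition shows every edge avoiding $v$ is present in $G$ iff present in $G'$, while every edge incident to $v \in \cP$ lies in $E_1$. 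So a protected-DP neighboring step is precisely a rewiring of the single protected vertex $v$'s edges within $E_1$, with $E_2$ unchanged.

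With this correspondence in hand the conclusion follows by definition. Interpreting ``$(\eps,\delta)$-differentially private in its first argument'' as differential privacy under the neighboring relation on $E_1$ given by rewiring a single protected vertex's edges (with the second argument held fixed), the previous step shows that every pair of protected-DP neighbors $G, G'$ maps to a pair $(E_1, E_2)$ and $(E_1', E_2)$ that are neighbors in exactly this sense. The assumed guarantee then gives $\Pr[\cA_\cI(E_1, E_2) \in S] \le e^\eps \Pr[\cA_\cI(E_1', E_2) \in S] + \delta$ for every $S \subseteq \cO$, which is the protected-DP inequality for $G, G'$. Since the partition $(\cP, \cT)$ and the neighboring pair were arbitrary, $\cA_\cI$ satisfies $(\eps,\delta)$-protected differential privacy.

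I expect the only real obstacle to be bookkeeping around the neighboring relations rather than any substantive inequality. One must pin down the (otherwise unstated) meaning of privacy ``in the first argument'' and confirm it matches the restriction of the protected-DP relation, paying attention to edges between two protected vertices (which appear in two vertex records, yet are still captured as a single rewiring of $v$) and to edges touching $v$ but incident to a targeted vertex (which correctly land in $E_1$ rather than $E_2$). Once these membership checks are made carefully, the argument is a direct chaining of definitions.
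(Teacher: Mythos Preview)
Your proposal is correct. The paper itself states \Cref{lem:pdp} without proof, treating it as an immediate consequence of the definitions; your argument is exactly the definitional unpacking the paper elides, namely that a protected-DP neighboring step fixes $E_2 = E(\cT)$ and changes $E_1$ only in the record of the single protected vertex $v$, so differential privacy in the first argument yields the protected-DP inequality directly.
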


\subsection{Basic Privacy Tools}
We include some basic privacy tools here to facilitate the discussion
of our algorithm for the rest of the paper. For simplicity, we will
state these tools in the generic setting, in which we view algorithms
to be arbitrary randomized mappings from $\cX^n$ to $\cR$. 

A basic, but extremely useful result is that differential privacy is robust to arbitrary \emph{post-processing}:
%First, the following result shows
%that any post-processing of a differentially private outcome remains
%private as long as it is independent of the private data.

\begin{lemma}[Post-Processing \cite{DMNS06}]
  For any algorithm $\cA: \cX^n \to \cR$ and any (possibly
  randomized) function $p: \cR \to \cR'$, if $\cA(\cdot)$ is $(\eps, \delta)$-differentially
  private then $p(\cA(\cdot))$ is $(\eps, \delta)$-differentially private.
\label{lem:post}
\end{lemma}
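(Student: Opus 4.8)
The plan is to reduce the randomized case to the deterministic case, handling the latter directly through preimages. First I would treat a deterministic post-processing map $p \colon \cR \to \cR'$. Fix neighboring databases $D, D'$ and an arbitrary event $S' \subseteq \cR'$. The crucial observation is that the preimage $p^{-1}(S') = \{r \in \cR : p(r) \in S'\}$ is itself an event in the output space $\cR$, and that the event $\{p(\cA(D)) \in S'\}$ coincides exactly with $\{\cA(D) \in p^{-1}(S')\}$. Applying the $(\eps,\delta)$-differential privacy of $\cA$ to the event $p^{-1}(S') \subseteq \cR$ then gives
\[
\Pr[p(\cA(D)) \in S'] = \Pr[\cA(D) \in p^{-1}(S')] \leq e^\eps \Pr[\cA(D') \in p^{-1}(S')] + \delta = e^\eps \Pr[p(\cA(D')) \in S'] + \delta,
\]
which is precisely the claim for deterministic $p$.

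For a genuinely randomized $p$, I would expose its internal randomness: write $p(r) = g(r, \omega)$, where $\omega$ is a random seed drawn from some distribution independent of both the input and the internal coins of $\cA$, and $g(\cdot, \omega)$ is deterministic for every fixed value of $\omega$. Conditioning on $\omega$ and invoking the deterministic bound above for each realization $g(\cdot,\omega)$, I obtain, for every $\omega$,
\[
\Pr[g(\cA(D), \omega) \in S' \mid \omega] \leq e^\eps \Pr[g(\cA(D'), \omega) \in S' \mid \omega] + \delta.
\]
Since $\omega$ is independent of $\cA$'s randomness, taking expectation over $\omega$ on both sides and using linearity and monotonicity of expectation yields $\Pr[p(\cA(D)) \in S'] \leq e^\eps \Pr[p(\cA(D')) \in S'] + \delta$. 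As $D, D'$ and $S'$ were arbitrary, $p \circ \cA$ satisfies $(\eps,\delta)$-differential privacy.

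The only real subtlety — and the step I expect to require the most care — is the measure-theoretic bookkeeping: verifying that $p^{-1}(S')$ is a measurable event so that the privacy guarantee of $\cA$ can legitimately be applied to it, and confirming that the representation $p(r) = g(r,\omega)$ with $\omega$ independent of $\cA$ is valid for the class of randomized maps considered. In the discrete setting relevant to our algorithms (where outputs are finite ordered lists), these concerns are vacuous, since every subset is an event and the independent-seed representation is immediate, so the argument goes through without complication. I note finally that the same proof applies verbatim to show that \emph{protected} differential privacy is closed under post-processing, by reading $\cA$ as a function of the protected-edge input in the sense of \Cref{lem:pdp}.
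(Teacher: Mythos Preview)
Your proof is correct and is the standard argument for closure of differential privacy under post-processing. Note, however, that the paper does not prove this lemma at all: it is stated as a known result with a citation to \cite{DMNS06} and used as a black box, so there is no ``paper's own proof'' to compare against. Your write-up would serve perfectly well as a self-contained proof if one were desired.
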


Another extremely useful property of differential privacy is that it
is compositional --- given two differentially private algorithms,
their combination remains differentially private, with parameters that
degrade gracefully. In fact, there are two such composition
theorems. The first, simpler one lets us simply add the privacy
parameters when we compose mechanisms:

\begin{lemma}[Basic Composition~\cite{DMNS06}]
  If  $M_1: \cX^n \to \cR_1$ is $(\eps_1,
  \delta_1)$-differentially private, and $M_2:\cX^n
  \times \cR_1 \to \cR_2$ is $(\eps_2,\delta_2)$-differentially private
  in its first argument, then $M: \cX^n \to \cR_2$ is
  $(\eps_1+\eps_2,\delta_1+\delta_2)$ differentially private where
$$
M(D) = M_2(D, M_1(D)).
$$
\label{lem:comp}
\end{lemma}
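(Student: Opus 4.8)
The plan is to fix neighboring databases $D, D' \in \cX^n$ and a target event $S \subseteq \cR_2$, and to prove the one-sided inequality $\Pr[M(D) \in S] \le e^{\eps_1 + \eps_2}\Pr[M(D') \in S] + \delta_1 + \delta_2$ directly; since neighboring is symmetric this covers both directions. The natural first move is to condition on the intermediate output $r = M_1(D)$. Let $p, q$ denote the output densities of $M_1(D), M_1(D')$ over $\cR_1$, and let $\mu^D_r, \mu^{D'}_r$ denote the densities of $M_2(D,r), M_2(D',r)$ over $\cR_2$, all taken with respect to a common dominating measure. Because $M_1$ and $M_2$ draw independent internal randomness, the density of $M(D) = M_2(D, M_1(D))$ at $y$ is $\int_{\cR_1} p(r)\,\mu^D_r(y)\,dr$, so that $\Pr[M(D) \in S] = \int_{\cR_1} p(r)\,\Pr[M_2(D,r) \in S]\,dr$, and symmetrically (with $q$ and primes) for $M(D')$.

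I would handle the pure case $\delta_1 = \delta_2 = 0$ first, as it is clean and is all that the application (\Cref{thm:main}) requires. Here $(\eps_1,0)$-DP of $M_1$ gives the pointwise bound $p(r) \le e^{\eps_1} q(r)$, and $(\eps_2,0)$-DP of $M_2$ \emph{in its first argument} gives, for every fixed $r$, $\Pr[M_2(D,r) \in S] \le e^{\eps_2}\Pr[M_2(D',r) \in S]$. Substituting both into the conditioning identity yields the factor $e^{\eps_1+\eps_2}$ with no additive slack.

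For the general $(\eps,\delta)$ case I would pass to the lower-envelope (hockey-stick) description of approximate privacy. Define $\hat p(r) = \min\{p(r), e^{\eps_1} q(r)\}$, so that $\hat p \le e^{\eps_1} q$ pointwise and the discarded mass $\int (p - \hat p)\,dr = \int \max\{0,\, p - e^{\eps_1} q\}\,dr \le \delta_1$ by the standard equivalence of one-sided $(\eps_1,\delta_1)$-DP with a bound of $\delta_1$ on this integral. Analogously, for each $r$ set $\hat\mu^D_r(y) = \min\{\mu^D_r(y),\, e^{\eps_2}\mu^{D'}_r(y)\}$, so $\hat\mu^D_r \le e^{\eps_2}\mu^{D'}_r$ and $\int_y (\mu^D_r - \hat\mu^D_r)\,dy \le \delta_2$. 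Now let $\hat P(y) = \int_{\cR_1} \hat p(r)\,\hat\mu^D_r(y)\,dr$. Pointwise, $\hat P(y) \le \int e^{\eps_1} q(r)\, e^{\eps_2}\mu^{D'}_r(y)\,dr = e^{\eps_1+\eps_2}\cdot(\text{density of } M(D') \text{ at } y)$. Using the factorization $p\mu^D_r - \hat p\,\hat\mu^D_r = p\,(\mu^D_r - \hat\mu^D_r) + (p - \hat p)\,\hat\mu^D_r$, whose two summands are nonnegative, the total mass removed is at most $\int_r p(r)\,\delta_2\,dr + \int_r (p - \hat p)\,dr \le \delta_2 + \delta_1$. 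Therefore $\Pr[M(D) \in S] \le \hat P(S) + (\delta_1 + \delta_2) \le e^{\eps_1 + \eps_2}\Pr[M(D') \in S] + \delta_1 + \delta_2$, which is the claim.

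The step I expect to be the main obstacle is obtaining the clean additive term $\delta_1 + \delta_2$. If one instead simply bounds the inner probability by $e^{\eps_2}(\cdot) + \delta_2$ and then applies $(\eps_1,\delta_1)$-DP to the resulting bounded test function (via a layer-cake argument), the two one-sided slacks combine to give $e^{\eps_2}\delta_1 + \delta_2$ (or $\delta_1 + e^{\eps_1}\delta_2$ in the other order) --- a spurious $e^{\eps}$ inflation of one $\delta$. The lower-envelope split above is precisely what removes this factor: the discarded $L^1$ masses of the two mechanisms add without any $e^{\eps}$ multiplier, while the retained ``pure'' parts compose with the clean factor $e^{\eps_1+\eps_2}$. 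I would therefore emphasize this decomposition as the crux, noting that the pure-DP specialization needed downstream avoids the issue entirely.
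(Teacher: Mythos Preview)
The paper does not actually prove this lemma; it is stated as a standard tool with a citation to \cite{DMNS06} and is immediately followed by the advanced composition lemma, with no proof given in between. So there is no ``paper's own proof'' to compare against.

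That said, your argument is correct and is essentially the standard one. The pure case is immediate from the conditioning identity and the two pointwise bounds, and this is indeed all that is invoked downstream in the paper. For the approximate case, your lower-envelope (hockey-stick) decomposition is exactly the right device for obtaining the clean additive $\delta_1 + \delta_2$: the identity $p\mu^D_r - \hat p\,\hat\mu^D_r = p(\mu^D_r - \hat\mu^D_r) + (p-\hat p)\hat\mu^D_r$ with both summands nonnegative lets you bound the discarded $L^1$ mass by $\delta_2 + \delta_1$ without picking up the spurious $e^{\eps}$ factor that the naive two-step bound would introduce. One small point worth making explicit: the bound $\int_y \hat\mu^D_r(y)\,dy \le 1$ (since $\hat\mu^D_r \le \mu^D_r$) is what caps the second term at $\delta_1$, and the fact that the removed mass is pointwise nonnegative is what lets you pass from the global $L^1$ bound to the bound on the single event $S$. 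Both are implicit in what you wrote but would be worth one clause each in a final write-up.
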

We can of course apply the composition theorem repeatedly, and so the
composition of $m$ mechanisms, each of which is
$\epsilon$-differentially private is $m\epsilon$-differentially
private.  The second composition theorem, due to~\cite{DRV10}, allows
us to compose $m$ mechanisms while letting the $\epsilon$ parameter
degrade \emph{sublinearly} in $m$ (at a rate of only $O(\sqrt{m})$),
at the cost of a small increase in the $\delta$ parameter.

\begin{lemma}[Advanced Composition \cite{DRV10}]
Fix $\delta>0$.  The class of $(\eps',\delta')$-differentially private
mechanisms satisfies $(\eps,m\delta' + \delta)$-differential privacy
under $m$-fold adaptive composition \footnote{See
    \cite{DRV10} for a formal exposition of adaptive composition.} for
$$
\eps ' = \frac{\eps}{\sqrt{8 m \log(1/\delta)}}.
$$
\label{lem:advanced_comp}
\end{lemma}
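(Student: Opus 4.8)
The final statement is the advanced composition theorem of \cite{DRV10}; I would prove it by establishing concentration of the \emph{privacy loss} random variable. Fix a pair of neighboring databases $D, D'$, and let $M_1, \ldots, M_m$ be the adaptively composed $(\eps', \delta')$-differentially private mechanisms, producing a transcript $o = (o_1, \ldots, o_m)$ in which each $o_i$ may depend on $o_1, \ldots, o_{i-1}$. The plan rests on the standard characterization that $\cA$ satisfies $(\eps, \delta)$-differential privacy whenever $\prob{L > \eps} \le \delta$, where $L(o) = \ln \frac{\Pr[\text{transcript } o \text{ under } D]}{\Pr[\text{transcript } o \text{ under } D']}$ and the probability is over $o$ drawn from the composed mechanism run on $D$. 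So it suffices to show that $L$ is tightly concentrated around a small mean.

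First I would treat the pure case $\delta' = 0$. By the chain rule, $L = \sum_{i=1}^m Z_i$ with $Z_i = \ln \frac{\Pr[o_i \mid o_{<i}, D]}{\Pr[o_i \mid o_{<i}, D']}$. Conditioned on the history $o_{<i}$, the map $o_i \mapsto Z_i$ is precisely the privacy loss of a single $\eps'$-differentially private mechanism, so two facts hold: $|Z_i| \le \eps'$ almost surely, and a short computation gives the expected-loss (KL-divergence) bound $\ex{Z_i \mid o_{<i}} \le \eps'(e^{\eps'} - 1) \le 2\eps'^2$. Writing $\mu_i = \ex{Z_i \mid o_{<i}}$, the variables $Z_i - \mu_i$ form a bounded martingale difference sequence with respect to the natural filtration generated by $o_{<i}$ --- and this martingale structure is exactly what \emph{adaptive} composition produces, since $M_i$ is allowed to depend on earlier outputs.

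Next I would apply the Azuma--Hoeffding inequality to this martingale. With each difference bounded by $2\eps'$ (as $|Z_i| \le \eps'$ and $|\mu_i| \le \eps'$ for small $\eps'$), concentration gives $\prob{\sum_i Z_i > \sum_i \mu_i + t} \le \exp\bigl(-t^2 / (8 m \eps'^2)\bigr)$. Taking $t = \eps' \sqrt{8 m \ln(1/\delta)}$ makes the right-hand side at most $\delta$, so with probability at least $1 - \delta$ we have $L \le 2m\eps'^2 + \eps' \sqrt{8 m \ln(1/\delta)}$. Substituting $\eps' = \eps / \sqrt{8 m \ln(1/\delta)}$ collapses the deviation term to exactly $\eps$ and sends the mean term to the lower-order quantity $\eps^2 / (4 \ln(1/\delta))$, which establishes $(\eps, \delta)$-differential privacy in the $\delta' = 0$ case.

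Finally I would handle $\delta' > 0$. The reduction uses the pointwise characterization that an $(\eps', \delta')$-private mechanism agrees with an $\eps'$-private, bounded-privacy-loss mechanism except on an event of probability at most $\delta'$ in each argument; a union bound over the $m$ rounds charges an additional $m\delta'$ to the failure probability, which yields the claimed $(\eps, m\delta' + \delta)$-differential privacy. I expect the main obstacle to be pinning down the constants: the factor $8$ in $\eps'$ arises from the Azuma variance proxy $4m\eps'^2$ together with the second-order mean term $2m\eps'^2$, and the $\delta'$-to-$m\delta'$ reduction must be set up so that the bounded-loss martingale analysis above applies verbatim on the good event. The remaining ingredients --- the chain-rule decomposition of $L$ and the privacy-loss tail characterization --- are routine.
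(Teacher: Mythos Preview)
The paper does not prove this lemma at all; it is stated as a cited result from \cite{DRV10} and used as a black box. Your proposal is a faithful sketch of the standard proof from that reference: decompose the privacy loss into a martingale of per-round log-likelihood ratios, bound each increment by $\eps'$ and its conditional mean by $\eps'(e^{\eps'}-1)$, apply Azuma--Hoeffding, and handle $\delta'>0$ by a union bound over the per-round ``bad'' events. So there is nothing to compare against in this paper --- your approach \emph{is} the DRV10 argument.

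One small caveat on constants: your martingale bound yields $L \le \eps + \eps^2/(4\ln(1/\delta))$ with probability $1-\delta$, not exactly $L \le \eps$. The lemma as phrased in the paper (solving for $\eps'$ in terms of a target $\eps$) is already a simplified restatement that suppresses the second-order $m\eps'(e^{\eps'}-1)$ term from the original DRV10 bound, so you should not expect the substitution to close exactly; the honest conclusion is $(\eps + O(\eps^2/\ln(1/\delta)),\, m\delta'+\delta)$-differential privacy, which is what the literature actually proves and what the paper is implicitly invoking.
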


When designing private algorithms, we will work extensively with function 
\emph{sensitivity} for functions defined on data sets --- which informally, measures how much the
function value can change when a single data entry in the input
data set is altered.

\begin{definition}[Sensitivity]
 The sensitivity $\Delta f$ of a function $f : \cX^n \to \RR$ is
 defined as
 $$
\Delta(f) = \max_{D\sim D'} |f(D) - f(D')|,
 $$ where $D\sim D'$ indicates that $D$ and $D'$ are neighboring
databases.
\end{definition}

We will give different notions of sensitivity in the next section, which are more appropriate for some tasks in our setting. Finally, we introduce two simple algorithms that provide differential  privacy by adding noise proportional to the sensitivity of a function.

For any function $f \colon \cX^n\rightarrow \RR$, the \emph{Laplace
  mechanism} applied to function $f$ is the algorithm which on input $D$ releases $\hat f(D) := f(D) + \nu$,
where $\nu \sim \Lap(\Delta(f) / \eps)$ and $\Lap(b)$ denotes the
Laplace distribution centered at 0 with probability density function
\[
\frac{1}{2b} \exp\left(- \frac{|x|}{b}\right).
\]

\begin{lemma}[\citep{DMNS06}]\label{lem:lap}
  The Laplace mechanism is $\eps$-differentially private.
\end{lemma}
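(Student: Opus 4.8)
The plan is to establish the required multiplicative bound directly at the level of probability density functions, and then integrate. Fix a function $f \colon \cX^n \to \RR$ with sensitivity $\Delta(f)$, set $b = \Delta(f)/\eps$, and let $D, D'$ be any pair of neighboring databases. On input $D$ the mechanism outputs $\hat f(D) = f(D) + \nu$ with $\nu \sim \Lap(b)$, so $\hat f(D)$ is distributed as a Laplace random variable centered at $f(D)$ with scale $b$; write its density as $p_D(y) = \frac{1}{2b}\exp(-|y - f(D)|/b)$, and similarly $p_{D'}(y)$ for input $D'$. Because differential privacy is a statement about the ratio $\Pr[\hat f(D) \in S] / \Pr[\hat f(D') \in S]$ for arbitrary measurable $S \subseteq \RR$, it suffices to bound the pointwise density ratio $p_D(y)/p_{D'}(y)$ by $e^\eps$ uniformly in $y$.

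First I would compute this ratio. The normalizing constants $1/(2b)$ cancel, leaving
\[
\frac{p_D(y)}{p_{D'}(y)} = \exp\!\left(\frac{|y - f(D')| - |y - f(D)|}{b}\right).
\]
The key step is to control the exponent using the reverse triangle inequality, which gives $|y - f(D')| - |y - f(D)| \le |f(D) - f(D')|$. Since $D$ and $D'$ are neighbors, the definition of sensitivity yields $|f(D) - f(D')| \le \Delta(f)$, so the exponent is at most $\Delta(f)/b = \eps$ by our choice of scale. Hence $p_D(y)/p_{D'}(y) \le e^\eps$ for every $y$.

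Finally, I would integrate this pointwise bound over any event $S \subseteq \RR$:
\[
\Pr[\hat f(D) \in S] = \int_S p_D(y)\,dy \le e^\eps \int_S p_{D'}(y)\,dy = e^\eps \Pr[\hat f(D') \in S],
\]
which is exactly the $\eps$-differential privacy guarantee (with $\delta = 0$). The argument is symmetric in $D$ and $D'$, so no separate case is needed. There is no real obstacle here: the only nontrivial ingredient is the reverse triangle inequality bounding the difference of the two distances by the distance between the centers, and the whole estimate collapses because the Laplace scale $b = \Delta(f)/\eps$ was chosen precisely to make $\Delta(f)/b$ equal to $\eps$. If one wanted the bound for vector-valued $f$, the same proof goes through with $|\cdot|$ replaced by the $\ell_1$ norm and $\Delta(f)$ the corresponding $\ell_1$ sensitivity.
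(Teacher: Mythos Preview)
Your argument is correct and is exactly the standard proof of this fact. The paper does not actually supply a proof of this lemma; it is stated as a citation to~\cite{DMNS06} and used as a black box. So there is nothing to compare against, and your density-ratio computation via the reverse triangle inequality is precisely the textbook derivation (see, e.g.,~\cite{DR14}).
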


Another simple algorithm, useful for answering non-numeric queries, is
the \emph{Report Noisy Max} mechanism: given a database $D\in \cX^n$
and a collection of $k$ functions $f_1, f_2, \ldots, f_k$ each with
sensitivity at most $\gamma$, Report Noisy Max performs the following
computation:
\begin{itemize}
\item Compute the noisy estimate of each function evaluated on $D$: $\hat f_i := f_i(D)
  + \nu$ where $\nu\sim \Lap(\gamma/\eps)$;
\item Output the index $i^* = \arg\max_{i} \hat f_i$, and also the noisy value $\hat f_{i^*}$.
\end{itemize}

\begin{lemma}[\citep{DR14}]\label{lem:rnm}
  The Report Noisy Max mechanism is $2\eps$-differentially private.
\end{lemma}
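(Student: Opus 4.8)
The plan is to establish $2\eps$-differential privacy directly, by bounding, for each fixed reported index, the probability that Report Noisy Max outputs that index under two neighboring databases. Consistent with the cited DR14 result, I analyze the reported maximizer $i^\ast = \arg\max_i \hat f_i$, whose privacy is exactly the content of the lemma. Fix neighbors $D \sim D'$ and a candidate output index $i$. The goal is to show $\Pr[\cA(D) = i] \le e^{2\eps}\Pr[\cA(D') = i]$; since the index is a discrete output, establishing this for every $i$ suffices for every event $S \subseteq \{1,\dots,k\}$.

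First I would condition on the noise drawn for every coordinate other than $i$, writing $\nu_{-i} = (\nu_j)_{j \ne i}$, and observe that, up to the measure-zero event of ties, index $i$ is the maximizer on $D$ precisely when $\nu_i \ge t(D) := \max_{j \ne i}\bigl(f_j(D) + \nu_j\bigr) - f_i(D)$, and analogously on $D'$ with threshold $t(D')$. The two thresholds are built from the same drawn values $\nu_{-i}$, so their difference is governed entirely by how much the queries move between $D$ and $D'$.

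The main step, and the source of the factor of two, is bounding $|t(D) - t(D')| \le 2\gamma$. This follows from two facts: $|f_i(D) - f_i(D')| \le \gamma$ by the sensitivity bound, and $\bigl|\max_{j \ne i}(f_j(D) + \nu_j) - \max_{j \ne i}(f_j(D') + \nu_j)\bigr| \le \max_{j \ne i}|f_j(D) - f_j(D')| \le \gamma$, using that the pointwise maximum is $1$-Lipschitz under per-coordinate perturbations. The factor of two reflects the non-monotone case, in which the chosen query can rise by $\gamma$ while the competing maximum falls by $\gamma$; a tighter, single-$\eps$ bound is available only under a monotonicity assumption on the queries, which we do not impose.

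Finally I would invoke a tail estimate for $\nu_i \sim \Lap(\gamma/\eps)$. Writing $S(s) = \Pr[\nu_i \ge s]$, one checks that $\bigl|\tfrac{d}{ds}\log S(s)\bigr| \le \eps/\gamma$ for all real $s$ (on the positive branch $S(s) = \tfrac12 e^{-s\eps/\gamma}$, and the negative branch where $S \to 1$ is handled separately), so $S(t(D)) \le e^{(\eps/\gamma)|t(D) - t(D')|} S(t(D')) \le e^{2\eps} S(t(D'))$. Integrating this inequality over $\nu_{-i}$, whose law is independent of the database, yields $\Pr[\cA(D) = i] \le e^{2\eps}\Pr[\cA(D') = i]$, and since $i$ was arbitrary the lemma follows. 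I expect the only delicate points to be the validity of the Laplace log-derivative bound on the negative branch and the routine verification that ties have probability zero; the conceptual content lies entirely in the $2\gamma$ threshold shift.
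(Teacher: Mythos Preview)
The paper does not supply its own proof of this lemma; it merely states the result with a citation to \cite{DR14}. So there is no in-paper argument to compare against. Your conditioning argument for the privacy of the reported index $i^\ast$ is correct and is exactly the standard approach from the cited reference, adapted from monotone (counting) queries to arbitrary sensitivity-$\gamma$ queries; this adaptation is precisely what produces the $2\gamma$ threshold shift and hence the factor of $2$ in the privacy parameter.

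There is, however, one genuine gap relative to the statement as the paper uses it. The paper's description of Report Noisy Max outputs not only the index $i^\ast$ but also the noisy value $\hat f_{i^\ast}$, and that value is relied upon downstream (see the proof of \Cref{lem:searchp}). Your argument treats only the index. Extending the pointwise density-ratio argument to the pair $(i^\ast,\hat f_{i^\ast})$ is not routine: once the output value $v$ is fixed, conditioning on $\nu_{-i}$ replaces your single Laplace tail ratio by a product of $k-1$ Laplace \emph{CDF} ratios $\prod_{j\ne i} F\bigl(v-f_j(D)\bigr)/F\bigl(v-f_j(D')\bigr)$, and for general (non-monotone) queries this product is not controlled by $e^{2\eps}$ --- it can be as large as $e^{(k-1)\eps}$ in the left tail. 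So the $2\eps$ guarantee you establish for the index alone does not automatically transfer to the joint output; you should either supply a separate argument covering the value, or note explicitly (as you partially do) that the cited result in \cite{DR14} concerns the index only.
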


\section{Statistics of Proximity (SoP)}
\label{sec:sos}
Our family of graph search algorithms will rely on various
network-centric “statistics of proximity” (SoP) that ascribe a
numerical measure of proximity of an individual vertex $v$ based on its
position in the network relative to a set $S$ of vertices from the
targeted population (which will in our usage always be the set of
targeted individuals discovered so far by the search
algorithm). Specifically, a statistic of proximity is a function $f$
that maps a network $G$, a node $v$, and a set of nodes $S\subseteq
\cT$ to a real number. Since the value $f(G, v, S)$ can reveal
information about the links in the network, we will often need to
perturb the values of these statistics with noise, calibrated with
scale proportional to the \emph{targeted sensitivity} --- the maximum
change in any \emph{targeted} node's SoP relative to any set $S$ when
a \emph{protected node}'s adjacency list is changed.
% Note that in \Cref{alg:template}, the step that checks $f(\hat T,
% v)\geq \Gamma$ uses private data, and hence needs to be privatized.

\begin{definition}[Targeted Sensitivity]
  Let $f\colon \cG \times V\times 2^{\cT} \rightarrow \RR$ be a statistic
  of proximity. The \emph{targeted sensitivity} of $f$ is
\[
\Delta(f) = \max_{G\sim G', t\in \cT, S \subseteq \cT} \left|f(G, t,
S) - f(G', t, S)\right|,
\]
where $G\sim G'$ indicates that $G$ and $G'$ are neighboring graphs
in $\cG$ relative to a fixed partition of $V$ into $\cP$ and $\cT$.
\end{definition}
Note that when computing the \emph{targeted sensitivity} we are not concerned with the effect that a change in the edges incident on vertices in $\cT$ has on the statistic, nor on the effect of any change on the statistic computed on vertices $v \in \cP$.

Another quantity of interest is \emph{impact cardinality} --- the
maximum \emph{number} of nodes whose SoP's can change as the result of a change to the
adjacency list of a single node $v \in \cP$:

\begin{definition}[Impact Cardinality]
Let $f\colon \cG \times V\times 2^{\cT} \rightarrow \RR$ be a statistic of
proximity. The \emph{impact cardinality} of $f$ is
\[
\IC(f) = \max_{G\sim G', S \subseteq \cT} \left|\{ v\in V\mid f(G, v, S) \neq
f(G', v, S)\}\right|.
\]
\end{definition}

\iffalse Similarly, if we focus on the SOP for the terrorist nodes
only, we could also define the following \emph{terrorist sensitivity}:
\begin{definition}[Terrorist Sensitivity]
  Let $f\colon\cG(T, \cF)\times V\times 2^T \rightarrow \RR$ be a statistic of
  proximity, then the terrorist sensitivity of $f$ is
\[
\Delta^T(f) = \max_{G\sim G', t\in T,  \tilde T\subset T} \left|f(G, t, \tilde
T) - f(G', t, \tilde T)\right|
\]
\end{definition}

Another quantity of interest , which tells us the maximum number of
nodes with their SOP's affected if we change an innocent node in the
network. Budget sensitivity allows us to reason how many more/fewer
background check investigations our algorithm might need to perform to
find the same number of terrorists on neighboring networks.

\begin{definition}[Budget Sensitivity]
Let $f\colon \cG(T, \cF)\times V\times 2^T \rightarrow \RR$ be a statistic of proximity,
then the terrorist sensitivity of $f$ is
\[
\Delta^B(f) = \max_{G\sim G', \tilde T\subset T} \left|\{ v\in
V\mid f(G, v, \tilde T) \neq f(G', v, \tilde T)\}\right|
\]
\end{definition}
\fi

We include some examples of candidate SoPs and their sensitivities. A desirable property for good
statistics is that they should have low sensitivity (relative to the scale of the statistic)
and small impact cardinality (relative to the target number of
queries to the identity oracle), which will allow us to achieve protected differential privacy by adding only small amounts of
noise to the various parts of our computations.

\begin{itemize}

\item $\sf{Flow}_k$$(G, v, S)$: the value of the maximum flow that can be routed between
  node $v$ and the nodes in $S$ using only paths of length at most $k$;

\item $\sf{Path}_k$$(G, v, S)$: the number of paths from $v$ to
  nodes in $S$ with length at most $k$;

\item $\sf{Triangle}$$(G, v, S) = \left|\{ \{a, b\}\subseteq S \mid
  a, b, v\text{ forms a triangle in } G\}\right|$, the number of triangles formed by the vertex $v$ in $G$;

\item $\sf{CN}$$(G, v, S) = \left|\{ u\mid (v, u)\in E \mbox{ and }
  (u, v')\in E \mbox{ for some } v'\in S\}\right|$, the number of common
  neighbors $v$ has with vertices in $S$.
\end{itemize}

In graphs with maximum degree $d$, the sensitivity of these SoPs are as follows:
\begin{itemize}
\item $\Delta({\sf{Flow}_k})\leq d$ since a vertex of degree $d$ can only affect the size of the flow by at most $d$.
\item $\Delta\left( \sf{Path}_k \right) \leq (k - 1) d^{k-1}$  since the total number of paths from $v$ to $S$ on which a vertex $u \in \cP$ might lie is at most $\sum_{j = 1}^{k - 1} d^{j - 1} d^{k - j} = (k - 1) d^{k - 1}$. Here we used the index $j$ to denote the index of $u$ along the path starting from $v$ together with the fact that the total number of different paths of length $\ell$ from $u$ is at most $d^{\ell}$.
\item $\Delta\left( \sf{Triangle}\right) \leq d$ since each triangle is associated with an edge and the total number of edges affected is at most $d$.
\item $\Delta\left( \sf{CN} \right) \leq 1$ since a single vertex can change the count of common neighbors by at most 1.
\end{itemize}
Note that ${\sf{Path}}_1(G,v,S)$, which simply counts the number of
edges between $v$ and $S \subseteq \cT$ actually has targeted
sensitivity \emph{zero}. This is because, since $S \subseteq \cT$, if
$v \in \cT$ is also a member of the targeted population, then the
statistic is a function only of $E(\cT)$, the edge set of the subgraph
defined over the targeted sub-population $\cT$. Since $E(\cT)$ is
identical on all neighboring graphs, and because targeted sensitivity
only measures the sensitivity of the SoP evaluated on \emph{targeted}
nodes to changes in \emph{protected} nodes, we get zero
sensitivity. This will be important to our analysis.

\section{SoP Based Targeting Algorithms}
Before we present the full algorithm, we will first present some
useful subroutines together with analysis of their privacy properties.

\subsection{Statistic-First Search }
First, we introduce statistic-first search (\bfs), a search algorithm
that explores the entire targeted connected component given a seed
targeted node. It is a search strategy that only inspects the
neighbors of verified targeted nodes. The formal description is
presented in~\Cref{alg:bfs}.
%% \gy{I don't like the name BFS because the algorithm we are
 %%  describing here is not a variant of BFS, the way it is written it
 %%  would be correct to call it private SFS. A variant of BFS could
 %%  be an algorithm that explores nodes according to their distance
 %%  from the source (while breaking ties for nodes at the same
 %%  distance according to some rule).}
\begin{algorithm}[h]
\caption{$\bfs(G, t)$}
\label{alg:bfs}
\begin{algorithmic}[0]
\STATE {\bf Input}: known targeted $t$ in a network $G$
\STATE {\bf Initialize}:
\[
\tilde T = \{t\} \qquad I = \{t\} \qquad \cN = \text{ neighbors of } t
\]
\STATE {\bf while} $\cN \setminus I \neq \emptyset$
\INDSTATE {Let \[v' = \argmax_{x \in \cN \setminus I}{\sf{Path}}_1(G, x, \tilde T)\]}
\INDSTATE {Query $\cI(v')$ to determine $v'$'s targeted status.}
\INDSTATE {$I = I\cup \{v'\}$}

\INDSTATE{\textbf{if} $\cI(v') = 1$ \textbf{ then } $\tilde T = \tilde T
  \cup \{v'\} \mbox{ and } \cN = \mbox{ neighbors of } \tilde T $}

%% \INDSTATE[2]{} \INDSTATE[2]{\textbf{for each} neighbor $v''$ of $v'$}
%% \INDSTATE[3]{$Q.push(v'')$} \INDSTATE[1]{\textbf{else} label $v'$ as a
%%   non-targeted}
\STATE {\bf Output}: list $\tilde T$
\end{algorithmic}
\end{algorithm}

We now establish the simple but remarkable privacy guarantee of $\bfs$ --- the algorithm
can often identify a targeted connected component free of privacy cost.
\begin{lemma}
\label{lem:bfsp}
The graph search algorithm $\bfs$ satisfies $0$-protected differential
privacy.
\end{lemma}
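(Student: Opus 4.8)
The plan is to exploit the fact that $0$-protected differential privacy is an equality constraint: since $e^{0}=1$, the two inequalities in the definition force $\Pr[\bfs(G)\in S] = \Pr[\bfs(G')\in S]$ for every event $S$ and every pair of neighbors $G,G'$. Because $\bfs$ (\Cref{alg:bfs}) uses no randomness --- all noise in $\pmc$ is confined to the new-component search, not to statistic-first search --- it computes a deterministic function of its input. Hence it suffices to prove the purely combinatorial statement that $\bfs(G,t)=\bfs(G',t)$ whenever $G$ and $G'$ are neighboring with respect to a fixed partition $(\cP,\cT)$. Equivalently, invoking \Cref{lem:pdp}, I would show that the output is a function of $E_2=E(\cT)$ alone and is constant in the first argument $E_1 = E\setminus E(\cT)$; a map that ignores $E_1$ is trivially $0$-differentially private in $E_1$, so $0$-protected differential privacy follows immediately.

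The core of the argument is an invariant showing that the entire trajectory of the discovered set $\tilde T$ depends only on $E(\cT)$. First I would record two elementary facts about the loop: (i) $\tilde T$ only ever contains targeted vertices and grows solely by appending a vertex $v'$ that the oracle certifies as targeted, so the final $\tilde T$ is exactly the connected component of $t$ in the subgraph induced on $\cT$; and (ii) whenever the examined $v'$ is protected, $\tilde T$ is left unchanged, hence the priority ${\sf Path}_1(G,x,\tilde T)$ of every remaining targeted candidate is unchanged. The key point is that for any targeted $x$ and any $\tilde T\subseteq\cT$, the quantity ${\sf Path}_1(G,x,\tilde T)$ merely counts edges between $x$ and $\tilde T$, all of which lie inside the subgraph induced on $\cT$; since neighboring graphs agree on $E(\cT)$, these priorities are identical in $G$ and $G'$. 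I would then argue by induction on the number of targeted vertices added: given that the ordered prefix $s_0=t, s_1,\dots,s_i$ of discovered targets (and hence the current $\tilde T$) is determined by $E(\cT)$, the next target $s_{i+1}$ is exactly the unexamined targeted neighbor of $\tilde T$ maximizing ${\sf Path}_1(\cdot,\cdot,\tilde T)$ under a fixed tie-break (e.g.\ by vertex identifier), which again depends only on $E(\cT)$. Consequently both the final set and its discovery order coincide on $G$ and $G'$.

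The step I expect to require the most care is precisely the interleaving with protected examinations. In $G$ and $G'$ the sets of protected neighbors of $\tilde T$, and their ${\sf Path}_1$ values, may differ arbitrarily, so the two executions can examine entirely different protected vertices, in different orders, consuming different budgets. I would show this is \emph{inert} for the output: because each $\argmax$ selects the globally highest-priority unexamined neighbor, and examining a protected vertex alters neither $\tilde T$ nor the priorities of targeted candidates, the first targeted vertex selected after any change to $\tilde T$ is always the maximizing targeted neighbor, regardless of how many protected vertices are skimmed off first. Handling exact ties between protected and targeted priorities under the fixed tie-breaking rule --- verifying that a protected vertex tied with the intended target cannot displace it from the discovery sequence --- is the one place the bookkeeping must be done explicitly; once that is settled, the determinism of the output, and thus $0$-protected differential privacy, follows.
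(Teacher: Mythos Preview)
Your proposal is correct and follows the same approach as the paper: both argue that because ${\sf Path}_1$ evaluated on targeted vertices depends only on $E(\cT)$, the deterministic output of $\bfs$ is determined by $E(\cT)$ alone and hence identical on neighboring graphs. Your treatment of the interleaving with protected examinations and tie-breaking is more careful than the paper's terse proof, but this is elaboration of the same idea rather than a different route.
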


\begin{proof}
  Let $G$ and $G'$ be two neighboring networks in $\cG$ with respect to the same partition $(\cP, \cT)$. We know that
  both networks have the same set of targeted nodes $\cT$ and targeted
  links $E(\cT)$. Since we know that $\Delta({\sf Path_1}) = 0$, and
  $\bfs$ only branches on the evaluations of $f$ on nodes $v \in \cT$,
  the behavior of $\bfs$ depends only on $\cT$ and $E(\cT)$,
  and hence $\bfs(G, v, f)$ and $\bfs(G', v, f)$ always produce the
  same output.
  %% Note that the output of the algorithm doesn't depend on the set of
  %% non-targeted edges, so the algorithm is $0$-differentially private
  %% w.r.t. the non-targeteds links.
\end{proof}

\iffalse
To achieve this privacy guarantee, we propose the following scheme
\begin{enumerate}
\item Start a statistic-first-search from the seed targeted $s$, and
  continue to branch out only when we discover a targeted node
\item Given the targeted sub-network $T$ found in the last step,
  target new suspects that have a lot of connections with $T$. For
  each node $v\in V\setminus V(T)$, compute the number of short-length
  paths from $v$ to the $T$. We then use {\bf sparse vector technique} to
  select vertices with number of paths above some threshold $\Gamma$.
  \sw{For this part, we do seem to get some fairness guarantee.}
\item Background check every suspect produced in the second step.
\end{enumerate}
\fi

%% \begin{remark}
%% Note that the algorithm $\bfs$ not private if we have a fixed
%% background check budget. The presence or absence of a non-targeted
%% edge could affect whether we spend a unit of the budget to investigate
%% an protected node. This in turn could affect whether we have the suffic
%% \end{remark}

\subsection{Private Search for Targeted Component}
With $\bfs$, we can start with a seed node $v \in \cT$, and at no additional privacy cost, find the entire connected component $T \subseteq \cT$ in the subgraph defined on vertices in $\cT$ that $v$ belongs to. This is a useful subroutine in a graph search algorithm: however, once we have exhausted our seed node's connected component $T$, we need a way to search for a new seed node $v'$ that is part of a new connected component. This is what our subroutine $\searc$ does:

\begin{enumerate}
\item Given a list of already identified members of the targeted subpopulation $\tilde T \subseteq \cT$ and a SoP, we compute a noisy SoP value for each node $\hat f(v)$;
\item we sort the nodes in decreasing order of their noisy SoP value, and query each vertex $v$ in this order to determine whether $v \in \cT$ or $v \in \cP$ until we find a node such that $v \in \cT$.
\item If we query $K$ nodes without having found any members of the targeted sub-population, we halt the search. The stopping condition needs to be checked privately, so $K$ is in fact a randomly perturbed value.
\end{enumerate}
We include a formal description of $\searc$ in~\Cref{alg:sc}.

\begin{algorithm}[h]
\caption{$\searc(G, \tilde T, I, f, \eps, K)$}
\label{alg:sc}
\begin{algorithmic}[0]
\STATE {\bf Input}: identified members of the targeted population $\tilde T \subseteq \cT$ in a network $G$,
the set of investigated nodes $I$, SoP $f$, privacy parameter $\eps$,
and stopping threshold $K$

\STATE {\bf Initialize}:
\[
\mbox{Set noisy stopping threshold }\hat K = K + \nu \mbox{ where } \nu \sim
\Lap(2\, \IC(f) / \eps)\mbox{ and } \mbox{count}=0
\]
\STATE {\bf for} each $v \in V \setminus I$:

\INDSTATE {let } $\hat f(v) = f(G, v, \tilde T)+ \zeta_v$ where
$\zeta_v\sim \Lap(4\Delta(f)/\eps)$

\STATE {\textbf{while} $(V\setminus I) \neq\emptyset$ and count$\leq \hat K$}
\INDSTATE {Let \[v' = \argmax_{x \in V \setminus I}\hat f(v)\]}
\INDSTATE {Query $\cI(v')$ to determine if $v' \in \cT$.}
\INDSTATE {Let $I = I\cup \{v'\}$ and count $ = $ count $+1$}

\INDSTATE{\textbf{if}  $\cI(v') = 1$  \textbf{ then return } $\{v'\}$}

%% \INDSTATE[2]{} \INDSTATE[2]{\textbf{for each} neighbor $v''$ of $v'$}
%% \INDSTATE[3]{$Q.push(v'')$} \INDSTATE[1]{\textbf{else} label $v'$ as a
%%   non-targeted}
\STATE{\bf{return} $\emptyset$}
\end{algorithmic}
\end{algorithm}

\begin{lemma}
\label{lem:searchp}
The targeting algorithm $\searc$ instantiated with privacy parameter
$\eps$ satisfies $\eps$-protected differential privacy.
\end{lemma}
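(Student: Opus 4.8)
The statement to prove is Lemma~\ref{lem:searchp}: the algorithm $\searc$ instantiated with privacy parameter $\eps$ satisfies $\eps$-protected differential privacy. Let me think about what $\searc$ actually does and where the noise enters.

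Looking at the algorithm, there are two sources of randomness: the perturbation $\zeta_v \sim \Lap(4\Delta(f)/\eps)$ added to each SoP value $f(G,v,\tilde T)$, and the noise $\nu \sim \Lap(2\,\IC(f)/\eps)$ added to the stopping threshold $K$.

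Let me think about the structure. The algorithm sorts by noisy SoP and queries in that order until it finds a target or hits the (noisy) threshold $\hat K$. This is essentially a combination of Report Noisy Max and a private stopping rule.

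**Now, how would I prove it?**

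The key insight from Lemma~\ref{lem:pdp} is that protected DP reduces to ordinary DP in the first argument $E_1$ (the edges not among targeted vertices). The edges $E_2 = E(\cT)$ are fixed across neighboring graphs. So I need to show the output distribution changes by at most $e^\eps$ when a single protected vertex's edges are rewired.

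Here's my plan. When a protected vertex $v$'s edges change, by the definition of targeted sensitivity, each value $f(G, x, \tilde T)$ for $x \in \cT$ changes by at most $\Delta(f)$. But crucially, the output of $\searc$ is only a targeted vertex $v'$ (or $\emptyset$), so what matters is the noisy ordering among the relevant candidates and when we stop.

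I see two coupled privacy costs that must sum to $\eps$. First is the cost of the selection/stopping among the SoP values. Second is the cost of the noisy stopping threshold. The factor of $4$ in $\zeta_v$ and $2$ in $\nu$ suggest we want each piece to cost $\eps/2$, with the $4$ absorbing a factor-of-2 from a Report-Noisy-Max-style argument (Lemma~\ref{lem:rnm} gives $2\eps$ for noise scale $\gamma/\eps$).

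**The main structural steps.**

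I would argue as follows. By Lemma~\ref{lem:pdp} it suffices to bound the ratio of output probabilities under a change to $E_1$. Fix neighboring $G, G'$ differing in one protected vertex's edges. I would analyze the process as a sequence of decisions: at each step the algorithm either outputs the current vertex (if it is targeted and has the max noisy value) or continues. Condition on the randomness and track how the Laplace noise must be shifted to produce the same output on $G'$ as on $G$.

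The cleanest approach is to decompose the analysis into the threshold mechanism and the selection mechanism, and invoke basic composition (Lemma~\ref{lem:comp}). For the threshold piece: the count of non-targeted queries before success is a function that can change by at most $\IC(f)$ when a protected vertex changes (since only $\IC(f)$ vertices' SoP values move, altering the ordering for at most that many vertices); adding $\Lap(2\,\IC(f)/\eps)$ noise to $K$ makes the stopping decision $(\eps/2)$-private. For the selection piece: outputting the first targeted vertex in the noisy order is a Report-Noisy-Max-type computation over values of targeted sensitivity $\Delta(f)$; with noise scale $4\Delta(f)/\eps$ this contributes $(\eps/2)$ via Lemma~\ref{lem:rnm} (the extra factor of $2$ in the scale compensating for the $2\eps$ in that lemma).

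**The main obstacle.**

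The hard part will be handling the \emph{interaction} between the stopping rule and the selection rule cleanly, since the same noisy SoP values govern both the ordering and (implicitly) which vertices get counted toward the threshold. I expect the subtlety is that $\IC(f)$, not $|\cT|$ or the total query count, is the right sensitivity for the stopping count --- this is why the theorem's privacy cost does not grow with the number of vertices examined. I would need to verify carefully that rewiring one protected vertex changes the \emph{sequence} of queries (and hence the count) by an amount controlled by $\IC(f)$, which requires a coupling argument showing that the query order is stable except at the $\IC(f)$ vertices whose values moved. Once that coupling is in place, the rest follows from the standard Laplace/Report-Noisy-Max guarantees and basic composition, and I would present it as the composition of an $(\eps/2)$-private threshold test with an $(\eps/2)$-private selection.
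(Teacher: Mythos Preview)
Your proposal is essentially correct and matches the paper's approach: decompose into an $(\eps/2)$-private Report Noisy Max selection over the targeted nodes and an $(\eps/2)$-private threshold test governed by $\nu$, then apply basic composition (\Cref{lem:comp}) and \Cref{lem:pdp}.

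The one place where the paper is cleaner than your sketch is precisely the ``interaction'' you flag as the main obstacle. The paper does \emph{not} resolve it by a coupling argument on the query order. Instead it reformulates $\searc$ as the following two-step procedure: (1) run Report Noisy Max over the \emph{targeted} nodes only, outputting both the winning index $t$ and its noisy value $\hat f(t)$ (recall \Cref{lem:rnm} allows releasing the noisy value as well as the argmax); (2) with $\hat f(t)$ now a released constant and the $\{\zeta_v\}$ fixed, let $b$ be the number of protected nodes $i$ with $\hat f(i)\ge \hat f(t)$, and output $t$ if $b-\nu\le K$ and $\emptyset$ otherwise. Because at most $\IC(f)$ nodes have their SoP change between neighboring graphs, $|b(G)-b(G')|\le \IC(f)$, so releasing $\hat b=b-\nu$ with $\nu\sim\Lap(2\,\IC(f)/\eps)$ is $(\eps/2)$-private by the Laplace mechanism, and the final decision is post-processing of $\hat b$. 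The concern you raise---that the winning targeted node itself might differ on $G$ versus $G'$---is entirely absorbed into step~(1); once $(t,\hat f(t))$ is released, step~(2) compares against a \emph{fixed} threshold $\hat f(t)$, so the sensitivity of $b$ really is $\IC(f)$ and no stability argument for the full query order is required. Your proposed coupling would likely work but is more labor than needed.
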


\begin{proof}
Let $G$ and $G'$ be neighboring networks in $\cG$. First suppose that $\tilde T = \cT$. In this case, $\searc$ will output $\emptyset$ with probability $1$ on both inputs $G$ and $G'$, and hence satisfy $0$-protected differential privacy. Hence, for the remainder of the argument, we can assume that there exists a vertex $v \in \cT \setminus \tilde T$.

In this case, the algorithm can equivalently be viewed as the following 2-step procedure:
\begin{enumerate}
\item Use the Report Noisy Max algorithm to output the index of the targeted node
  $t$ which maximizes $\hat f(t)$ \emph{together with the perturbed value} $\hat f(t)$;
\item Let $b$ be the number of nodes $i \in \cP$ such that $\hat f(i)
  \geq \hat f(t)$. If $b - \nu \leq K$, output node $t$. Otherwise output $\emptyset$.
\end{enumerate}

When the input to the algorithm $G = (V,E)$ is viewed as the pair of edge sets $(E \setminus E(\cT), E(\cT))$, we show below that each of these two steps satisfies $\eps/2$-differential
privacy with respect to its first argument. By the basic composition
theorem~\Cref{lem:comp} and~\Cref{lem:pdp} the algorithm $\searc$
satisfies $\eps$-protected differential privacy.

The first step is an instantiation of the Report Noisy Max algorithm
with privacy parameter $\eps/2$, so it is $\eps/2$-differentially
private by~\Cref{lem:rnm}.

The second step is post-processing of $\hat b = b - \nu$. We just need
to show that releasing $\hat b$ is $\epsilon/2$ differentially
private, and the result will follow by the post processing
lemma:~\Cref{lem:post}. For the following analysis, we will fix the
(arbitrary) values of $\{\zeta_v\}$ and compute probabilities as a
function of the randomness of $\nu$.

Since the SoP $f$ has impact cardinality $\IC(f)$, we know that there are
at most $\IC(f)$ many nodes $v$ such that $f(G, v, \tilde T) \neq f(G'
, v, \tilde T)$. It follows that
\[
\left|\{ v \mid f(G, v, \tilde T) + \zeta_v \neq f(G', v, \tilde T) + \zeta_v\}\right| \leq \IC(f)
\]
Let $b(G)$ and $b(G')$ denote the number of nodes $i \in \cP$ with
$\hat f(v) \geq \hat f(t)$ in $G$ and $G'$ respectively. Then we know
that,
\[
|b(G) - b(G')|\leq \IC(f).
\]
Since we are releasing $\hat b$ by adding noise $\nu$ sampled from the
Laplace distribution $\Lap(2\,\IC(f)/\eps)$, $\hat b$ is
$\eps/2$-differentially private from the property of Laplace
mechanism~\Cref{lem:lap}.
\end{proof}

\subsection{The Full Algorithm: Putting the Pieces Together}

Our graph search algorithm alternates between two phases. In the first
phase, the algorithm starts with a seed node $v \in \cT$, and uses
$\bfs$ to find every other vertex $v' \in \cT$ that is part of the
same connected component as $v$ in the subgraph defined on
$\cT$. After this targeted component has been fully identified, the
second phase begins. In the second phase, the algorithm uses $\searc$
to search for a new vertex $v \in \cT$ that will serve as a seed node
for the next iteration of $\bfs$. Once such a seed node has been
found, the algorithm reverts to phase $1$, and this continues for a
specified number of iterations.  The formal description of the
algorithm in presented in~\Cref{alg:private}.

\iffalse In order to privatize the algorithm, we need add noise to
$f(\hat T, v)$ when selecting the top suspect. The private algorithm
is presented in~\Cref{alg:private}.  \fi

\begin{algorithm}[h]
\caption{Private Search Algorithm: $\pmc(G, \cS, f, k, N, \eps)$}
\label{alg:private}
\begin{algorithmic}[0]
  \STATE {\bf Input}: A network $G$, a seed node $\cS \in \cT$, a SoP $f$ for $\searc$, a target number
  of components $k$, a stopping threshold $N$ for
  $\searc$, and privacy parameter $\eps$

  \STATE{\bf Initialize}:
   Use set $I$ to keep track of the set of investigated nodes. Initially $I = \{\cS\}$.

  \STATE{\bf Let}  list  $\tilde T = \bfs(G, \cS)$
  \STATE{\bf For}
  $k-1$ rounds:
  \INDSTATE{\bf let} $a = \searc(G, \tilde T, I, f, \eps, N)$
  \INDSTATE{\bf if} $a = \emptyset$
  \INDSTATE[2]{\bf Output } $\tilde T$
  \INDSTATE{\bf else}
  \INDSTATE[2]{\bf let $\tilde T = \tilde T \cup \bfs(G, a)$}
  \STATE{\bf Output} $\tilde T$
\end{algorithmic}
\end{algorithm}

We now establish the following privacy guarantee
of~\Cref{alg:private}. Recall that the parameter $k$ represents the
maximum number of disjoint components of the subgraph defined on $\cT$
that the algorithm will identify.
\begin{theorem}
Fix any $0< \delta <1$.  $\pmc(\cdot, \cdot,
\cdot, \cdot, k, \cdot, \eps)$ satisfies $\eps_1$-protected
differential privacy for
\[
\eps_1 = (k - 1)\eps,
\]
and satisfies $(\eps_2, \delta)$-protected differential privacy for
\[
\eps_2 = 2\sqrt{2(k-1) \ln(1/\delta)} \eps.
\]

\iffalse Let
  $\delta > 0$ and let $C$ be the number of targeteds components we
  have found in an instantiation $\pmc(s, R, \alpha)$, then the
  algorithm is $(\eps, \delta)$-differentially private for \[ \eps =
  \frac{1}{\alpha}\, \sqrt{8C \ln(1/\delta)}.
\]
\fi
\end{theorem}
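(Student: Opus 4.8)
The plan is to reduce the privacy of the whole algorithm to the privacy of its two subroutines via composition, after first passing from protected differential privacy to ordinary differential privacy in a single argument. Writing the input graph as the pair $(E_1, E_2)$ with $E_1 = E \setminus E(\cT)$ and $E_2 = E(\cT)$ as in \Cref{lem:pdp}, it suffices to show that $\pmc$, viewed as a map $\cA_\cI(E_1, E_2)$, is $(\eps_1, 0)$- and $(\eps_2, \delta)$-differentially private in its first argument $E_1$; \Cref{lem:pdp} then upgrades this to protected differential privacy. The second argument $E_2$ is identical on neighboring graphs and may be treated as fixed, public side information throughout.

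Next I would decompose an execution of $\pmc$ (\Cref{alg:private}) into its atomic pieces: one initial call to $\bfs$, followed by at most $k-1$ rounds, where each round performs a single call to $\searc$ and, if that call succeeds, one further call to $\bfs$. By \Cref{lem:bfsp} every $\bfs$ call is $0$-protected differentially private --- equivalently, its output distribution is a function of $E_2$ alone --- so these calls contribute nothing to the $E_1$ budget and can be folded in as post-processing via \Cref{lem:post}. By \Cref{lem:searchp} each $\searc$ call with parameter $\eps$ is $\eps$-protected differentially private. Hence the only sources of privacy cost are the at most $k-1$ calls to $\searc$, and the algorithm is an adaptive composition of at most $k-1$ mechanisms each satisfying $\eps$-protected differential privacy.

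The one point requiring care --- and the main obstacle --- is that the number of $\searc$ invocations is data-dependent: on one neighboring graph the algorithm may run all $k-1$ rounds while on the other it halts early when $\searc$ returns $\emptyset$. I would handle this by observing that the decision to halt in each round is a deterministic function of the \emph{already released} output of the preceding $\searc$ call. Thus the sequence of rounds is a legitimate adaptive composition: at round $i$ the mechanism applied is either a genuine $\searc$ invocation or, once a halt has occurred, a trivial $0$-differentially-private null step, and which of the two is used depends only on previously released outputs. This fixes the number of composed mechanisms at exactly $k-1$ (padding with null steps) while preserving the per-step bound of $\eps$, so the composition theorems apply verbatim.

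Finally I would invoke the two composition theorems and convert back via \Cref{lem:pdp}. Basic composition (\Cref{lem:comp}), applied to the $k-1$ steps that are each $\eps$-differentially private in $E_1$ (plus the free $\bfs$ steps), yields $(k-1)\eps$-differential privacy, giving $\eps_1 = (k-1)\eps$. For the advanced bound I would apply \Cref{lem:advanced_comp} with $m = k-1$, taking each composed mechanism to be $\eps$-differentially private (playing the role of $\eps'$ in the lemma) and $\delta' = 0$; inverting $\eps' = \eps_{\mathrm{tot}}/\sqrt{8m\ln(1/\delta)}$ gives the composed parameter $\eps_{\mathrm{tot}} = \eps\sqrt{8(k-1)\ln(1/\delta)} = 2\sqrt{2(k-1)\ln(1/\delta)}\,\eps = \eps_2$, with accumulated failure probability $m\cdot 0 + \delta = \delta$. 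A final application of \Cref{lem:pdp} turns both statements into protected differential privacy, completing the argument.
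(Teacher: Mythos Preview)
Your proposal is correct and follows essentially the same approach as the paper: reduce to differential privacy in $E_1$ via \Cref{lem:pdp}, note that the $\bfs$ calls are free by \Cref{lem:bfsp}, and compose the at most $k-1$ invocations of $\searc$ using \Cref{lem:comp} and \Cref{lem:advanced_comp}. If anything, you are more careful than the paper's own proof --- you explicitly address the data-dependent halting by padding with null mechanisms, and you correctly instantiate the advanced composition with $m=k-1$ (the paper's proof writes $\sqrt{8k\ln(1/\delta)}$, a minor slip relative to the stated $\eps_2$).
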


\begin{proof}
The algorithm is a composition of at most $k$ instantiations of $\bfs$
and $(k-1)$ instantiations of $\searc$ with privacy parameter
$\eps$. Recall that each call to $\bfs$ is 0-differentially private,
and each call to $\searc$ is $\eps$-differentially private with respect to the edges incident on vertices in $\cP$. By the composition theorem, we know that
the algorithm is $(k-1)\eps$-differentially private
by~\Cref{lem:comp}, and at the same time $(\sqrt{8 k \ln(1/\delta)},
\delta)$-differentially private for any $\delta\in (0, 1)$
by~\Cref{lem:advanced_comp}. Our result then easily follows
from~\Cref{lem:pdp}.

  \iffalse
  It suffices to show that the functions calls BFS-Targeted and $\SV$
  satisfy standard $(\eps, \delta)$ differential privacy w.r.t. the
  non-targeted edge database. Let $D$ and $D'$ be a pair of
  neighboring edge databases (recall that neighboring databases only
  differ by one non-targeted edge).

  First, since $D$ and $D'$ differ on an non-targeted edge, we know
  BFS-Targeted on any node $v$ will give the same output on both $D$
  and $D'$. Thus, the instantiations of BFS-Targeted satisfy
  0-differential privacy w.r.t. the non-targeted edge database.

  Now, we show that the instantiation of $\SV$ is $(\eps,
  \delta)$-differentially private w.r.t. the non-targeted edge
  database. For any statistic of suspicion $f$, any targeted
  sub-network $T$ and any node $v$ not in $T$, we can think of $f(v,
  T)$ as a query $q_{v, T}$ on the edge database. The $\SV$ is
  essentially releasing above/below threshold answers for these
  queries $q_{v, T}$. Note that the sensitivity of $q$ w.r.t. the
  change of any non-targeted link is $\Delta(f)$. Since the algorithm
  only has budget $B$ for background checking, the total number of
  above threshold answers would be bounded by $B$. Then we know that
  Sparse$\left(n^2, B, \eps, \delta\right)$ satisfies $(\eps,
  \delta)$-differential privacy w.r.t. the non-targeted edge
  database.
\fi
\end{proof}

\section{Experiments}

\subsection{Subpopulation Construction}
Our experiments are conducted on two real social networks:

\begin{itemize}
\item the scientific collaboration network in DBLP (``Digital
  Bibliography and Library Project''), where nodes represent authors
  and edges connect authors that have coauthored a paper;
\item the movie costarring network in IMDB  (``Internet Movie Database''), where nodes represent
  actors and edges connect actors that have appeared in a movie
  together.
\end{itemize}

\paragraph{Pre-processing step on the networks:}{We
 sparsify the IMDB and DBLP networks by removing a subset of the
 edges. This will allow us to generate multiple targeted components
 more easily. In both networks, there is a natural notion of weights
 for the edges. In the case of DBLP, the edge weights correspond to
 the number of papers the individuals have co-written. In the case of
 IMDB, the edge weights correspond to the number of movies two actors
 have co-starred in. In our experiments, we only remove edges with
 weights less than 2.  }
\iffalse
The details of the two networks are included in~\Cref{tab:data}.
\begin{figure}[ht]
\centering % used for centering table
\caption{Social network datasets in our experiments}
\label{tab:data} % is used to refer this table in the text
\begin{tabular*}{\hsize}{c c c c} % centered columns (4 columns)
\hline\hline %inserts double horizontal lines
Social Networks & \# nodes & \# edges & Edge relation \\ [0.5ex] % inserts table
%heading
\hline % inserts single horizontal line
DBLP({\url{http://dblp.uni-trier.de}})          & 956,043   &   3,738,044 & Paper co-authorship\\
IMDB({\url{http://www.imdb.com}})    &   235,710   & 4,587,715 & Movie co-appearance\\ [1ex]
\hline %inserts single line
\end{tabular*}
\end{figure}
\fi

However, the networks we use do not have an identified partition of
the vertices into a targeted and protected subpopulation. Instead, we
generate the targeted subpopulation synthetically using the following
diffusion process. We use the language of ``infection'', which is
natural, but we emphasize that this process is not specific to our
motivating example of the targeted population representing people
infected with a dangerous disease. The goal of the infection process
is to generate a targeted subpopulation $\cT$ such that:
\begin{enumerate}
\item The subnetwork restricted to $\cT$ has multiple distinct
  connected components (so that the search problem is algorithmically
  challenging, and isn't solved by a single run of statistic-first
  search), and
\item The connected components of $\cT$ are close to one another in
  the underlying network $G$, so that the network data is useful in
  identifying new members of $\cT$.
\end{enumerate}
 
 The process $\infect(G, s, p, q, k)$ takes as input a seed infected
 node $s$, two values $p, q\in (0, 1)$, and a number of rounds $k$,
 and proceeds with two phases: %% \mk{justify/motivate; also have switched here
   %% to disease terminology}\ar{It isn't clear what is being described
   %% here algorithmically. For example, what is the scope of the loop?
   %% Are immune nodes selected at each round, or only at the end? What
   %% does it mean that ``The remaining infected nodes will be broken
   %% into different components''?  Is this just commentary about what
   %% will happen, or is it a step of the algorithm?}\sw{reworked}
\begin{enumerate}
  \item \textbf{Infection phase:} Initially, only the node $s$ is in
    the infected set $\tilde\cI$. Then in each of the $k$ rounds, each
    neighbor $v$ of the infected nodes $\tilde\cI$ becomes infected
    independently with probability $p$.
  \item \textbf{Immune phase:} After the infection process above, we
    will set some of the infected nodes as immune. For each node $i$
    in the infected node set $\tilde \cI$, let $i$ become ``immune''
    (non-infected) with probability $q$. %% Then the immune nodes will
    %% become intermediary nodes between the remaining infected nodes.
  %% \item The remaining infected nodes form components of different
  %%   sizes, and we will include the components of size at least 2 into
  %%   the final targeted population.
\end{enumerate}

We include a formal description of the algorithm in~\Cref{alg:infect}.

\begin{algorithm}[h]
\caption{$\infect(G, s, p, q, k)$}
\label{alg:infect}
\begin{algorithmic}[0]
\STATE {\bf Input}: a network $G$, a seed node $s$ in $G$, infection
probability $p$, and immune probability $q$
\STATE {Initially the infected population contains only the seed node}:
\[
\tilde I = \{s\} 
\]
\STATE {\bf for} $t = 1, \ldots, k$:
\INDSTATE {{\bf for} each node $v$ that is neighbor to $\tilde I$:}
\INDSTATE[2] Let $\nu$ be a uniformly random number from $[0, 1]$
\INDSTATE[2] {\bf if} $\nu \leq p$ {\bf then} $\tilde I = \tilde I \cup \{v\}$

\STATE{{\bf let} $\cT = \emptyset$}
\STATE {\bf for} each node $v'\in \tilde I$:
\INDSTATE let $\nu$ be a uniformly random number from $[0,1]$
\INDSTATE {\bf if} $\nu > q$ {\bf then} $\cT = \cT \cup \{v\}$ 

%% \INDSTATE {Let \[v' = \argmax_{x \in \cN \setminus I}f(G, x, \tilde T)\]}
%% \INDSTATE {Query $\cI(v')$ to determine $v'$'s targeted status.}
%% \INDSTATE {$I = I\cup \{v'\}$}
%% \INDSTATE{\textbf{if} $\cI(v') = 1$ \textbf{ then } $\tilde T = \tilde T
%%   \cup \{v'\} \mbox{ and } \cN = \mbox{ neighbors of } \tilde T $}
%% \INDSTATE[2]{} \INDSTATE[2]{\textbf{for each} neighbor $v''$ of $v'$}
%% \INDSTATE[3]{$Q.push(v'')$} \INDSTATE[1]{\textbf{else} label $v'$ as a
%%   non-targeted}
\STATE {\bf Output}:  $\cT$ as the targeted subpopulation
\end{algorithmic}
\end{algorithm}

\subsection{Non-Private Benchmark $\mc$}
We experimentally evaluate the performance of our
algorithm~\Cref{alg:private} on two social network data-sets with a
partition of vertices into $\cP$ and $\cT$ using the infection process
described in the previous section. We compare the performance of the
private version of our algorithm with the non-private
version~\Cref{alg:template} which uses the SoP directly, without
adding noise. The metric we are interested in is how many queries to
the identify oracle $\cI$ are needed by each algorithm to find a given
number of members of the protected sub-population $\cT$. We here give
a formal description of the non-private version of our graph search
algorithm in~\Cref{alg:template}.

\begin{algorithm}[h]
\caption{Non-Private Targeting Algorithm: $\mc(G, \cS, f, k, N)$}
\label{alg:template}
\begin{algorithmic}[0]
  \STATE {\bf Input}: A network $G$, a seed node $\cS \in \cT$, a SoP $f$ for $\searc$, a target number of components to find $k$, and a stopping threshold $N$ for $\searc$
  \STATE{\bf Initialize}:
   Use $I$ to keep track of the set of investigated nodes. Initially $I = \{\cS\}$.
  \STATE{\bf Let}  list $\tilde T = \bfs(G, \cS)$
  \STATE{\bf For} $k-1$ rounds:
\INDSTATE {\bf let} $\text{count} = 0$ and $a = \emptyset$
\INDSTATE {\textbf{while} $(V\setminus I) \neq\emptyset$ and count$\leq N$}
\INDSTATE[2] {Let \[v' = \argmax_{x \in V \setminus I} f_2(G, v, \tilde T)\]}
\INDSTATE[2] {Query $\cI(v')$ to determine if $v' \in \cT$.}
\INDSTATE[2] {Let $I = I\cup \{v'\}$ and count $ = $ count $+1$}
\INDSTATE[2]{\textbf{if}  $\cI(v')= 1$ \textbf{ then let } $a = v'$ and \textbf{break}}
  \INDSTATE{\bf if} $a = \emptyset$ \textbf{ then Output } $\tilde T$
  \INDSTATE{\bf else let}  $\tilde T = \tilde T \cup \bfs(G, a ,f_1)$
  \STATE{\bf Output} $\tilde T$
\end{algorithmic}
\end{algorithm}

\subsection{SoP Instantiations}

In our experiments, we will use the SoP %%  $\sf{Path}_1$ for $\bfs$
%% subroutine, and the targeted sensitivity of $\sf{Path}_1$ is 0 as we
%% have shown in~\Cref{sec:sos}. We also use the SoP 
$\CN$ for the $\searc$ subroutine, which is the number of common
neighbors between the node $v$ and the subset of nodes $S$
representing the already discovered members of the targeted
population. The targeted sensitivity of $\CN$ is bounded by 1.
\iffalse
\begin{itemize}
%% \item The SOP $f_1(G, v , \tilde T)$ we use for exploring the targeted
%%   component in $\bfs$ is the sum of the weights over the weighted
%%   edges connecting $v$ to $\tilde T$, that is, the number of
%%   collaborations the node $v$ has with the nodes in $\tilde T$. The
%%  targeted sensitivity of $f_1$ is $\Delta(f_1) = 0$.
\item The SOP $f_2(G, v, \tilde T)$ we use for searching the targeted
  component in $\searc$ is the number of common neighbors between the
  node $v$ and $\tilde T$. The targeted sensitivity of $f_2$ is
  $\Delta(f_2) \leq 1$ and we have impact cardinality $\IC(f_2)$
  bounded by $d$, where $d$ is the maximum degree among the networks
  we considered in $\cG$.
\end{itemize}
\fi
%% Formally, the SoP is a function $\CN \colon \cG \times V \times 2^V
%%  \to \RR$ such that \[ \CN(G, v , S) = \left|\{ u \mid (v, u)
%%  \text{ is an edge in } G \mbox{ and } (u, v') \text{ is an edge in
%%  } G \mbox{ for some } v' \in S \}\right| \]
\begin{lemma}
  The SoP $\CN$ has targeted sensitivity $\triangle(\CN)$ bounded by
  1.
\end{lemma}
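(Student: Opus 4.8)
The plan is to work directly from the definition of targeted sensitivity and to exploit the fact that only a single protected vertex's edges differ between neighboring graphs. Recall that $\CN(G,t,S)$ counts exactly those vertices $u$ that are simultaneously adjacent to the targeted vertex $t$ and adjacent to at least one vertex $v' \in S$. I would fix an arbitrary triple $G \sim G'$, $t \in \cT$, $S \subseteq \cT$ and bound $|\CN(G,t,S) - \CN(G',t,S)|$ by $1$; since this holds for every such triple, the maximum defining $\Delta(\CN)$ is at most $1$.

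First I would unpack the neighboring relation: $G \sim G'$ means there is a single protected vertex $w \in \cP$ such that $G$ and $G'$ agree on all edges except possibly those incident to $w$. The main step is then to show that the set of common neighbors counted by $\CN$ can differ between $G$ and $G'$ only in the single vertex $w$. To see this, consider any candidate vertex $u \neq w$. Its contribution to the count is governed by two kinds of edges: $(t,u)$, and $(u,v')$ for some $v' \in S$. Since $t \in \cT$ and $v' \in S \subseteq \cT$ while $w \in \cP$, both $t$ and every $v'$ are distinct from $w$, and by assumption $u \neq w$; hence none of these edges is incident to $w$, so they are identical in $G$ and $G'$. Therefore $u$ is counted in $G$ if and only if it is counted in $G'$, and the symmetric difference of the two common-neighbor sets is contained in $\{w\}$. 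This immediately yields $|\CN(G,t,S) - \CN(G',t,S)| \le 1$.

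There is no serious obstacle here; the only point requiring care is the bookkeeping that guarantees $w$ is the unique vertex whose status can flip. The essential observation is that because $t$ and all of $S$ lie in the targeted population while $w$ lies in the protected population, the only adjacencies to $t$ or to $S$ that rewiring $w$ can alter are those of $w$ itself. Equivalently, one could phrase the argument in terms of neighborhoods: writing the count as $|N(t) \cap N(S)|$ with $N(S) = \bigcup_{v' \in S} N(v')$, both $N(t)$ and $N(S)$ change between $G$ and $G'$ only in whether they contain $w$, so their intersection changes by at most one element. Either framing closes the proof.
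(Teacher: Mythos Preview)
Your proof is correct and follows essentially the same approach as the paper: both arguments observe that, since the rewired vertex $w$ is protected while $t$ and all of $S$ are targeted, the neighborhoods $N(t)$ and $N(S)$ can each change only in whether they contain $w$, so their intersection changes by at most one element. Your write-up is in fact more careful than the paper's, which leaves the ``only $w$ can flip'' step somewhat implicit.
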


\begin{proof}
  Let $G$ and $G'$ be two neighboring networks over the same protected
  and targeted populations $\cP$ and $\cT$. Let $t\in \cT$ be a
  targeted node and $S\subseteq V$ be a subset of nodes. Since $G$ and
  $G'$ only differ by the edges associated with a protected node $i$,
  we know that the neighbor sets of both $t$ and $S$ can differ by at
  most one node between $G$ and $G'$. Note that the $\CN(G, t, S)$
  computes the cardinality of the intersection between these two sets,
  and the intersection sets of these two networks can differ by at
  most one node. It follows that $\triangle(\CN) \leq 1$.
\end{proof}

%% \subsection{Experiments Details}

%% \paragraph{Pre-processing step on the networks:}{We sparsify the IMDB and DBLP networks by removing a subset of the
%% edges. This will allow us to generate multiple targeted components
%% more easily. In both networks, there is a natural notion of weights
%% for the edges. In the case of DBLP, the edge weights correspond to the
%% number of papers the individuals have co-written. In the case of IMDB,
%% the edge weights correspond to the number of movies two actors have
%% co-starred in. In our experiments, we only remove edges with weights
%% less than 2.  }

%% \paragraph{Noise Scale:}{
%% In the experiments, the private algorithm $\pmc$ adds noise sampled
%% from the Laplace distribution $\Lap(20)$ when performing the new
%% component search.}

%% \paragraph{Number of private runs:}{
%% For any fixed partition of the targeted and protected populations, we
%% run the private algorithm for 50 times in order to get an overall
%% evaluation of its performance.}

\end{document}